\newtheorem{theorem}{Theorem}[section]
\newtheorem{example}[theorem]{Example}
\newtheorem{corollary}[theorem]{Corollary}
\newtheorem{proposition}[theorem]{Proposition}
\newcommand{\tfrac}[2]{\textstyle{\frac{#1}{#2}}}
\newcommand{\varPi}{\Pi}
\newcommand{\R}{\mathbb R} 
\newcommand{\N}{\mathbb N} 
\newcommand{\sfa}{\mathsf{A}}
\newcommand{\sfc}{\mathsf{C}}
\newcommand{\sfe}{\mathsf{E}}
\newcommand{\sff}{\mathsf{F}}
\newcommand{\sfg}{\mathsf{G}}
\newcommand{\sfm}{\mathsf{M}}
\newcommand{\sfn}{\mathsf{N}}
\newcommand{\sfq}{\mathsf{Q}}
\newcommand{\sfp}{\mathsf{P}}
\newcommand{\p}{\mathsf{p}}
\newcommand{\hi}{\mathcal H} 
\newcommand{\ip}[2]{\langle {#1}|{#2}\rangle} 
\newcommand{\tra}[1]{\mathrm{tr}\left[ {#1} \right]} 
\newcommand{\no}[1]{||{#1}||} 
\newcommand{\kb}[2]{|#1\,\rangle\langle\,#2|} 
\newcommand{\fii}{\varphi}
\newcommand{\veps}{\varepsilon}
\newcommand{\eps}{\epsilon}
\newcommand{\qhat}{Q}
\newcommand{\phat}{P}
\newcommand{\bre}{\mathcal{B}(\R)} 
\newcommand{\brr}{\mathcal{B}(\R^2)} 
\newcommand{\prob}{\mathsf{p}} 
\newcommand{\cD}{\mathcal{D}}
\newcommand{\W}{{\mathcal W}}
\newcommand{\deq}[1]{\W_{\varepsilon_1}({#1},\sfq)}
\newcommand{\dep}[1]{\W_{\varepsilon_2}({#1},\sfp)}
\newcommand{\w}{{\mathcal W}^0}
\newcommand{\delqo}[1]{\w_{\varepsilon_1,\delta}({#1},\sfq)}
\newcommand{\delpo}[1]{\w_{\varepsilon_2,\delta}({#1},\sfp)}
\newcommand{\deleo}{\w_{\varepsilon,\delta}({\sfe_1},\sfe)}
\newcommand{\deeo}{\w_{\varepsilon}({\sfe_1},\sfe)}
\newcommand{\dele}[1]{\mathcal{W}_{\varepsilon,\delta}({#1},\sfe)}
\newcommand{\dee}[1]{\mathcal{W}_{\varepsilon}({#1},\sfe)}
\newcommand{\bede}[1]{\beta_{\veps,\delta}({#1},\sfe)}
\newcommand{\bee}[1]{\beta_{\veps}({#1},\sfe)}
\newcommand{\bom}{\tau}
\newcommand{\iqd}{J_{q;\delta}}
\newcommand{\ixd}{J_{x;\delta}}
\newcommand{\ipd}{J_{p;\delta}}
\newcommand{\iqw}{J_{q;w}}
\newcommand{\ipw}{J_{p;w'}}
\newcommand{\intv}[2]{J_{{#1};{#2}}}
\newcommand{\su}{{\rm supp}}
\newcommand{\h}[1]{\mathcal{#1}}
\newcommand{\hM}{{\h M}}
\newcommand{\epno}{\eps_{\footnotesize\textsc{no}}}
\def\supp{{\mathop{\rm supp}\nolimits}\,}
\def\sab{_{\alpha\mkern1mu\beta}}
\begin{document}

\title[Error and unsharpness in approximate joint measurements]
{Error and unsharpness in approximate joint measurements of position and momentum}

\author
{Paul Busch}
\address{Department of Mathematics, University of York, York, UK}
\ead{paul.busch@york.ac.uk}

\author
{David B.~Pearson}
\address{Department of Mathematics, University of Hull, Hull, UK}
\ead{d.b.pearson@hull.ac.uk}



\begin{abstract}
\noindent 
In recent years, novel quantifications of measurement error in quantum mechanics
have for the first time enabled precise formulations of Heisenberg's famous but often 
challenged measurement uncertainty relation. These relations take the form of a 
trade-off for the necessary errors in  joint approximate measurements of position and momentum and other incompatible pairs of observables. Here we review some of these error measures, examine their properties and suitability, and compare their relative strengths as criteria for ``good" approximations. \\
\end{abstract}
\pacs{03.65.Ta}

\maketitle



\section{Introduction}\label{intro}

In recent years, Heisenberg's uncertainty principle has received renewed
attention and scrutiny.  The principle is often loosely associated with three sets 
of ideas -- preparation uncertainty, joint measurement error trade-offs, and error-disturbance trade-offs.
While the first of these is uncontroversial, the latter two are subjects of an ongoing controversy. 

For many decades, the only formally and operationally well-defined form of uncertainty relation known 
in the physics literature was the familiar {\em preparation uncertainty relation} for 
standard deviations of, say, position and momentum,
\begin{equation}\label{eqn:state-ur}
\Delta(Q,\rho)\,\Delta(P,\rho)\ge \tfrac{1}{2}\hbar.
\end{equation}
This relation is a statement about the widths of the probability distributions $\rho^Q,\rho^P$ of 
the position $Q$ and momentum $P$ in a state $\rho$, and it can be 
tested by measuring position and momentum in separate runs of experiments
on particles prepared in the same state $\rho$. 

Notwithstanding this clear-cut interpretation, the relation (\ref{eqn:state-ur}) is often 
paraphrased as constituting a limitation of the accuracies of any attempted joint 
measurements of position and momentum. This unjustified conflation has equally often 
been criticised, but then it happened not seldom that the critics (or their readers) jumped 
to the conclusion that the uncertainty principle has nothing to do with the possibility or 
impossibility of joint measurements of position and momentum.

The joint measurement uncertainty question was brought into focus with these conflicting views 
but until recently a rigorous investigation of the problem has remained outstanding. The first seemingly plausible
attempt offered at quantifying measurement errors in quantum mechanics is based on the concept of
{\em noise operator} that was introduced into quantum optical amplifier theory in the 1960s and soon
after applied in the measurement context. For a brief history of the development of the notion of 
{\em noise-(operator) based error}, we refer the reader to \cite{BLW2013a}. On the basis of this 
{\em state-dependent} error measure it appeared that joint measurement error relations are much weaker than
suggested by the Heisenberg form (\ref{eqn:state-ur}); this has led to claims of a violation or circumvention
of Heisenberg-type measurement uncertainty relations, both theoretically (e.g., \cite{Ozawa03,Hall04})
and experimentally (e.g., \cite{Erh12,Roz12}). As shown in \cite{BLW2013a}, however, the noise-based error
measures do not purely quantify errors but also contain contributions of preparation uncertainty; moreover,
they are of limited operational significance as state-specific error measures. 

In the meantime, different measures of measurement error were introduced that quantify the performance of
measuring devices and as such  are state-independent. For these measures, joint measurement trade-off 
relations have been formulated and proven. This development, which is reviewed in \cite{BuHeLa07}, was
enabled by making full use of the operational possibilities of quantum mechanics, notably by the generalised 
representation of observables as positive operator valued measures 
(POVMs). 

A key concept for this solution to the joint-measurement problem is that of an
{\em approximate} measurement of a given observable (represented by a POVM) $\sfe$, which is any 
measurement whose associated POVM $\sff$ is close to $\sfe$ in a suitable operationally
relevant sense. This has made it possible to overcome the obstacle of the 
noncommutativity of $\qhat$ and $\phat$, which precludes any {\em sharp} 
joint measurement of these observables: there are (generally noncommuting) 
{\em unsharp} observables $\sfm_1$, $\sfm_2$ that are jointly measurable and 
still constitute reasonable approximations of $Q$ and $P$, respectively. 

Two observables $\sfm_1,\sfm_2$ on $\bre$ are said to be {\em jointly measurable} if there 
is a third, {\em joint observable} $\sfm$ on $\brr$ such that $\sfm_1,\sfm_2$ are the
Cartesian marginals of $\sfm$, $\sfm_1(X)=\sfm(X\times\R)$, $\sfm_2(Y)=\sfm(\R\times Y)$.

Three proposed measures of error for approximate measurements of position and momentum and their associated uncertainty relations were briefly reviewed in \cite{BuHeLa07}: 
these were referred to as {\em standard error}, {\em (Monge) metric error}, and 
{\em error bar width}. The first of these is what we called {\em noise-based error (measure)} above (due to the limitations of this concept it seems inappropriate to refer to it as ``standard'').  In the meantime, measurement uncertainty relations   have been proven  for a wider class of metric error measures \cite{BLW2013b,BLW2013c}; these are based on the so-called Wasserstein distance of order $\alpha$ between probability measures on a metric space; here $\alpha$ is a parameter whose values range from 1 to $\infty$. The Monge metric corresponds to the value $\alpha=1$, while $\alpha=2$ is found to provide a natural operational quantum generalisation of the notion of root-mean-square (rms) error. 

It is the purpose of the present paper to analyze further the concept of error bar width introduced in \cite{BuPe07} and to study  its connections with the metric error measures. Some aspects of the noise operator based error will also be considered to the extent that they are useful as estimates of the other measures. 
In addition to measurement errors,  we also review other quantities describing the intrinsic unsharpness of the 
approximators of $\qhat,\phat$.

The proofs of measurement uncertainty relations given in \cite{Werner04,BuHeLa07,BLW2013c}  make it evident that the  measurement error relations follow mathematically from related preparation uncertainty relations. Versions of these latter relations will be the starting points for the presentations of error and unsharpness measures to be given below.

We begin with a brief introduction of the requisite mathematical tools.

\section{Preliminaries}\label{sec:prelim}
 
Our considerations will be based on the usual description of a physical
system in a separable complex Hilbert space $\hi$, with states being represented
as positive operators $\rho$ of trace 1 (also called {\em density operators}).  The convex set of all states will denoted  
$S$. Pure states correspond to unit vectors $\fii\in\hi$ or rather the associated rank-1
projections $\kb{\fii}{\fii}\equiv P_\fii$. Observables are represented as positive operator measures (POVMs) 
$\sfe$ on a measurable 
space $(\Omega,\Sigma)$ that are normalised, i.e., $\sfe(\Omega)=I$. In this paper $(\Omega,\Sigma)$ 
will be one of the Borel spaces $(\R,\bre)$ or $(\R^2,\brr)$. An observable $\sfe$ is called
\emph{sharp} if it is projection valued; otherwise $\sfe$ is an \emph{unsharp}
observable. We write $\rho^\sfe$ 
for the probability measure induced by a state $\rho$ and an observable
$\sfe$ via the formula $\rho^\sfe(X):=\tra{\rho \sfe(X)}$, $X\in \Sigma$.
We use the notation $\sfe[x^k]$, $k\in\N$, for the $k^{\mathrm{th}}$ moment operators  
$\int x^k\sfe(dx)$ of an observable $\sfe$ on $\bre$. These operators are defined
on their natural domains \cite{lahti1998moment} 
$D(\sfe[x^k])$ of all $\fii\in\hi$ for which the function $x\mapsto x^k$ 
is integrable with respect to the complex measures $\ip{\psi}{\sfe(dx)\fii}$ for all  
$\psi\in\hi$; this contains the square-integrability domain
$\{\fii\in\hi\,:\, \int x^{2k}\ip{\fii}{\sfe(dx)\fii}<\infty\}$.
The moments
of a probability measure $\prob$ on $\R$ will be denoted $\prob[x^k]$, $k\in\N$.


All uncertainty relations to be studied here will be formulated for the case of  
a quantum particle in one spatial dimension, with Hilbert space $\hi=L^2(\R)$ 
and canonical position and momentum operators $\qhat,\phat$, defined 
via $(\qhat\psi)(x)=x\psi(x)$, $(\phat\psi)(x)=-i\hbar (d\psi/dx)(x)$ on the usual
maximal domains ensuring selfadjointness. Generalizations to more degrees of 
freedom are straightforward. The spectral measures of  $\qhat$ and $\phat$ will 
be denoted $\sfq$ and $\sfp$, respectively.

An important class of POVMs representing approximations of position and momentum
are given by {\em smeared} position and momentum observables $\sfq^\mu,\sfp^\nu$, 
defined as convolutions of $\sfq,\sfp$ with probability measures $\mu,\nu$ on $\bre$:
\begin{eqnarray}
\sfq^\mu(X)&=\sfq*\mu(X)=\int_\R\mu(X-q)\,\sfq(dq),\nonumber\\
\sfp^\nu(Y)&=\sfp*\nu(Y)=\int_\R\nu(Y-p)\,\sfp(dp).\nonumber
\end{eqnarray}
The integrals are defined in the weak operator topology. 

We will make use of the important class of covariant phase space observables which 
is defined as follows.
Let $W(q,p)=\exp(iqp/2\hbar)\exp(-iq\phat/\hbar)\exp(ip\qhat/\hbar )$ be the Weyl operators
comprising an irreducible unitary projective representation of
the translations on phase space $\R^2$.
An observable $\sfg$ on $\R^2$ is called a \emph{covariant phase space observable} if it satisfies the covariance condition
\begin{equation*}
W(q,p)\sfg(Z)W(q,p)^*=\sfg(Z-(q,p)),\quad Z\in\brr.\nonumber
\end{equation*}
This is satisfied by the following family of observables $\sfg=\sfg^\bom$ 
on $\R^2$, which are thus covariant phase space observables:
\begin{equation}\label{gen-cov-obs}
\mathcal{B}(\R^2)\ni Z\mapsto \sfg^\bom(Z)=   \frac 1{2\pi\hbar}\int_Z W(q,p)\bom W(q,p)^*dq\,dp;
\end{equation}
the integral is defined in the weak operator topology and the operator density is
generated  by an arbitrary fixed positive operator $\bom$ of trace 1
(for details of the proof of these properties, see, e.g., \cite{CRQM}).
Moreover, \emph{every} covariant phase space
observable is of the form (\ref{gen-cov-obs}) for some positive operator $\tau$ of trace 1. This fundamental fact
is implied by results of \cite{PSAQT}  and \cite{Werner1984}  and has 
been made explicit in \cite{CaDeTo03}
using the theory of induced representations and in
\cite{KiLaYl06a} using the theory of integration with respect to
operator measures.

The marginal observables of $\sfg^\bom$ are smeared position and momentum
observables $\sfq^{\mu_\bom},\sfp^{\nu_\bom}$, where
$\mu_\bom:=\bom_\varPi^\sfq$ and $\nu_\bom:=\bom_\varPi^\sfp$ are the probability distributions of $Q$ and $P$
in the state described by $\bom_\Pi^{\phantom{P}}$, that is,
\begin{equation*}
\sfg^\bom_1=\sfq*\mu_\bom,\quad \sfg^\bom_2=\sfp*\nu_\bom.
\end{equation*}
Here $\bom_\varPi^{\phantom{P}}=\varPi\bom\varPi^*$ is the operator obtained from $\bom$
under the action of the parity transformation $\varPi$
($\varPi\fii(x)=\fii(-x)$). 

There is a simple but fundamental characterization of all pairs of smeared position
and momentum observables admitting a joint measurement.
\begin{theorem}\label{thm:jm-smeared-qp}
A pair of smeared position and momentum observables $\sfq^\mu,\sfp^\nu$ are 
jointly measurable exactly when there exists a covariant phase space observable 
$\sfg^\bom$ of which they are marginals. In that case, $\mu=\mu_\bom$, $\nu=\nu_\bom$.
\end{theorem}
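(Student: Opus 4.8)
The plan is to prove the two implications separately; the ``if'' part is essentially immediate and the ``only if'' part is the substantive one.

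For the ``if'' direction, suppose $\sfq^\mu,\sfp^\nu$ are the Cartesian marginals of a covariant phase space observable $\sfg^\bom$. Then $\sfg^\bom$ itself is a joint observable for the pair, so $\sfq^\mu,\sfp^\nu$ are jointly measurable by definition; and comparing with the marginal formula $\sfg^\bom_1=\sfq*\mu_\bom$, $\sfg^\bom_2=\sfp*\nu_\bom$ recalled above yields $\sfq*\mu=\sfq*\mu_\bom$ and $\sfp*\nu=\sfp*\nu_\bom$. To extract $\mu=\mu_\bom$, $\nu=\nu_\bom$ from these I would use that convolution with $\sfq$ (resp.\ $\sfp$) is injective on probability measures: for any state $\rho$ one has $\rho^{\sfq^\mu}=\rho^\sfq*\mu$, so the equality of the smeared observables forces $\rho^\sfq*\mu=\rho^\sfq*\mu_\bom$ for all $\rho$; taking $\rho$ a pure state with a Gaussian wave function makes $\rho^\sfq$ a measure with nowhere-vanishing Fourier transform, and passing to Fourier transforms and dividing gives $\hat\mu=\hat\mu_\bom$, hence $\mu=\mu_\bom$, and likewise $\nu=\nu_\bom$. (This same remark shows that the smearing measure of a smeared position or momentum observable is uniquely determined, so the identification in the theorem is unambiguous.)

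For the ``only if'' direction, let $\sfm$ be a joint observable on $\brr$ with $\sfm(X\times\R)=\sfq^\mu(X)$ and $\sfm(\R\times Y)=\sfp^\nu(Y)$. The strategy is the classical group-averaging construction of covariant observables (as used in \cite{Werner1984,CaDeTo03,KiLaYl06a}): build from $\sfm$ a covariant phase space observable with the same two marginals. First I would check that for every $(q,p)\in\R^2$ the ``shifted conjugate''
\[
\sfm_{(q,p)}(Z):=W(q,p)\,\sfm\big(Z+(q,p)\big)\,W(q,p)^{*},\qquad Z\in\brr,
\]
is again a POVM on $\brr$ with the \emph{same} marginals $\sfq^\mu,\sfp^\nu$. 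This uses only the covariance of $\sfq$ and $\sfp$ under the translation subgroups of the Weyl system and the identities $X\times\R+(q,p)=(X+q)\times\R$, $\R\times Y+(q,p)=\R\times(Y+p)$: conjugating $\sfq^\mu(X+q)$ and $\sfp^\nu(Y+p)$ by $W(q,p)$ translates them back to $\sfq^\mu(X)$ and $\sfp^\nu(Y)$ (the phase factor in $W(q,p)$ being immaterial under conjugation). Next I would average over $(q,p)$ with an invariant mean $\mathrm{M}$ on $\R^2$ (or, equivalently, average the $\sfm_{(q,p)}$ over the squares $[-n,n]^2$ and pass to a weak-operator cluster point), defining $\sfm^{\mathrm{av}}$ by $\langle\psi|\sfm^{\mathrm{av}}(Z)|\psi\rangle:=\mathrm{M}_{(q,p)}\langle\psi|\sfm_{(q,p)}(Z)|\psi\rangle$. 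Since each $\sfm_{(q,p)}$ has marginals $\sfq^\mu,\sfp^\nu$ independently of $(q,p)$, so does $\sfm^{\mathrm{av}}$; and since $W(q_0,p_0)W(q,p)$ differs from $W(q_0+q,p_0+p)$ only by a phase, one gets $W(q_0,p_0)\sfm_{(q,p)}(Z)W(q_0,p_0)^{*}=\sfm_{(q+q_0,p+p_0)}\big(Z-(q_0,p_0)\big)$, whence translation-invariance of $\mathrm{M}$ makes $\sfm^{\mathrm{av}}$ obey the covariance relation $W(q_0,p_0)\sfm^{\mathrm{av}}(Z)W(q_0,p_0)^{*}=\sfm^{\mathrm{av}}(Z-(q_0,p_0))$. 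Thus $\sfm^{\mathrm{av}}$ is a covariant phase space observable with marginals $\sfq^\mu,\sfp^\nu$, which proves the existence statement; and by the structure theorem recalled above $\sfm^{\mathrm{av}}=\sfg^\bom$ for some positive trace-one $\bom$, so its marginals $\sfq*\mu_\bom,\sfp*\nu_\bom$ coincide with $\sfq^\mu,\sfp^\nu$, giving $\mu=\mu_\bom$, $\nu=\nu_\bom$ by the injectivity established above.

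The step I expect to be the main obstacle is verifying that $\sfm^{\mathrm{av}}$ is a genuine observable, i.e.\ a $\sigma$-additive normalized positive operator measure: an invariant mean delivers only a finitely additive operator set function, and while normalization and positivity survive the limit, countable additivity does not come for free. The standard way around this is either to realize the average as a weak-operator cluster point of the honest averages over $[-n,n]^2$ and then recover $\sigma$-additivity from the regularity imposed by covariance, or to perform the whole averaging at the level of the Fourier (``characteristic function'') transforms of the operator measures, where the needed continuity is automatic, and invert only at the end. The remaining verifications are routine manipulations with the Weyl relations and the explicit marginal formula for $\sfg^\bom$.
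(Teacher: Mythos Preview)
The paper does not supply its own proof of this theorem: immediately after the statement it simply records that ``this result has been obtained in a long series of investigations by various authors, culminating and summarised in \cite{CaHeTo05}.'' So there is no in-paper argument to compare against.

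That said, your outline is essentially the standard route taken in that literature (notably \cite{Werner04,CaHeTo05}): the ``if'' direction is trivial, and for ``only if'' one averages the shifted conjugates $\sfm_{(q,p)}$ over phase space to produce a covariant joint observable with the same marginals, then invokes the structure theorem (\ref{gen-cov-obs}) to identify it as some $\sfg^\bom$. Your check that each $\sfm_{(q,p)}$ retains the marginals $\sfq^\mu,\sfp^\nu$ is the crucial observation, and it is correct. You have also correctly located the genuine technical difficulty: an invariant-mean average yields only a finitely additive set function, so $\sigma$-additivity must be recovered separately. The standard cure in \cite{Werner04,CaHeTo05} is precisely the second option you mention---carry out the averaging at the level of the operator-valued Fourier transform (the ``characteristic function'' of the POVM), where the averaging reduces to a pointwise limit and one recovers a bona fide POVM upon inversion. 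Your first suggestion (weak-$*$ cluster points of box averages) would also work but needs a little more care to ensure the limit is $\sigma$-additive; the Fourier route is cleaner. The injectivity argument for $\mu=\mu_\bom$, $\nu=\nu_\bom$ via Gaussian states and division of Fourier transforms is fine.
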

This result has been obtained in a long series of investigations
by various authors, culminating and summarised in \cite{CaHeTo05}.

\section{Uncertainty: $\boldmath\alpha$-deviation and overall width}\label{sec:uncertainty} 

We will make use of the following measures of the widths of a probability distribution
$\prob:\bre\to[0,1]$ on $\R$. 
The \emph{standard deviation} $\Delta(\prob)$ is given by 
\begin{equation*}
\Delta(\prob):=\left(\int \left(x-\int x\prob(dx)\right)^2\prob(dx)\right)^{1/2}=
\left(\prob[x^2]-\prob[x]^2\right)^{1/2}.
\end{equation*}
The standard deviation of an observable $\sfe$ on $\bre$ in a state $\rho$ is
$\Delta(\sfe,\rho):=\Delta(\rho^\sfe)$. For vector states $\fii$ we write 
$\Delta(\sfe,\fii):=\Delta(\prob^\sfe_\fii)$. 

The standard deviation is a special case of the so-called (Wasserstein) $\alpha$-deviation:
\begin{equation*}
\Delta_\alpha(\prob):=\inf_y\left(\int \left|x-y\right|^\alpha\prob(dx)\right)^{1/\alpha},\quad 1\le\alpha<\infty.
\end{equation*}

The uncertainty relation for the standard deviations of position and momentum has recently been generalised to $\alpha$-deviation \cite{BLW2013b}.

\begin{theorem}[Preparation Uncertainty]\label{thm:prepURpq}
Let $\sfq$ and $\sfp$ be canonically conjugate position and momentum observables, and $\rho$ a density operator. Then, for any $1\leq \alpha,\beta<\infty$,
\begin{equation*}
  \Delta_\alpha(\rho^\sfq)\Delta_\beta(\rho^\sfp)\geq c\sab\hbar,
\end{equation*}
The constant $c\sab$ is connected to the ground state energy $g\sab$ of the Hamiltonian $H\sab=|Q|^\alpha+|P|^\beta$ by the equation
\begin{equation*}
    c\sab = \alpha^{\frac 1\beta}\beta^{\frac 1\alpha}\left(\frac{g\sab}{\alpha+\beta}\right)^{\frac 1\alpha+\frac 1\beta}.
\end{equation*}
The lower bound is attained exactly when $\rho$ arises from the ground state of the operator $H\sab$ by phase space translation and dilatation.
For $\alpha=\beta=2$, $H\sab$ is twice the harmonic oscillator Hamiltonian with ground state energy $g_{22}=1$, and  $c_{22}=1/2$.
\end{theorem}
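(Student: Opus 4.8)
My plan is to obtain the bound from the preparation uncertainty relation for the single Hamiltonian $H\sab$, using two symmetries of the problem: translation invariance of the $\alpha$-deviation, and the scaling behaviour of $\qhat$ and $\phat$ under $L^2$-dilations. I would first discard trivialities — if either $\Delta_\alpha(\rho^\sfq)$ or $\Delta_\beta(\rho^\sfp)$ is infinite there is nothing to prove — and then centre the state. Since $\alpha\ge 1$, the function $y\mapsto\int|x-y|^\alpha\rho^\sfq(dx)$ is finite, convex and coercive, hence attains its minimum at some $y_0$; similarly $z\mapsto\int|p-z|^\beta\rho^\sfp(dp)$ at some $z_0$. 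Conjugating $\rho$ by the Weyl operator $W(y_0,z_0)$ shifts the position and momentum distributions by $y_0,z_0$ without changing either $\alpha$-deviation, so I may assume $\Delta_\alpha(\rho^\sfq)^\alpha=\tra{\rho|\qhat|^\alpha}=:A$ and $\Delta_\beta(\rho^\sfp)^\beta=\tra{\rho|\phat|^\beta}=:B$, and one checks $0<A,B<\infty$ (neither can vanish because $\sfq,\sfp$ have no atoms).

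Next I would bring in the dilations $(D_\lambda\psi)(x)=\sqrt\lambda\,\psi(\lambda x)$, for which a one-line computation gives $D_\lambda\qhat D_\lambda^{*}=\lambda\qhat$ and $D_\lambda\phat D_\lambda^{*}=\lambda^{-1}\phat$. Hence, for $\rho_\lambda:=D_\lambda^{*}\rho D_\lambda$, one has $\tra{\rho_\lambda|\qhat|^\alpha}=\lambda^{\alpha}A$ and $\tra{\rho_\lambda|\phat|^\beta}=\lambda^{-\beta}B$, so that for every $\lambda>0$
\begin{equation*}
\lambda^{\alpha}A+\lambda^{-\beta}B=\tra{\rho_\lambda\,H\sab}\ \ge\ \inf\spec H\sab=g\sab ,
\end{equation*}
where $H\sab=|\qhat|^\alpha+|\phat|^\beta$ is the positive self-adjoint operator defined as the form sum of $|\qhat|^\alpha$ and $|\phat|^\beta$, and the inequality is just $\tra{\sigma H}\ge\inf\spec H$ for a state $\sigma$. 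Minimising the left-hand side over $\lambda$ (the minimiser is $\lambda_*=(\beta B/(\alpha A))^{1/(\alpha+\beta)}$) and tidying the exponents yields $A^{\beta/(\alpha+\beta)}B^{\alpha/(\alpha+\beta)}\ge (g\sab/(\alpha+\beta))\,\alpha^{\alpha/(\alpha+\beta)}\beta^{\beta/(\alpha+\beta)}$. Since $A^{1/\alpha}B^{1/\beta}=\big(A^{\beta/(\alpha+\beta)}B^{\alpha/(\alpha+\beta)}\big)^{(\alpha+\beta)/(\alpha\beta)}$ and $1/\alpha+1/\beta=(\alpha+\beta)/(\alpha\beta)$, raising to the power $(\alpha+\beta)/(\alpha\beta)$ gives exactly $\Delta_\alpha(\rho^\sfq)\Delta_\beta(\rho^\sfp)=A^{1/\alpha}B^{1/\beta}\ge \alpha^{1/\beta}\beta^{1/\alpha}\big(g\sab/(\alpha+\beta)\big)^{1/\alpha+1/\beta}=c\sab$. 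I would run this in units $\hbar=1$; the factor $\hbar$ is restored by the substitution $\qhat=\sqrt\hbar\,\qhat_0$, $\phat=\sqrt\hbar\,\phat_0$ with $[\qhat_0,\phat_0]=i$, which multiplies the product of deviations by $\hbar$. The case $\alpha=\beta=2$ is then immediate: $H_{22}=\qhat^2+\phat^2$ is twice the harmonic oscillator, $g_{22}=1$, and the formula returns $c_{22}=2^{1/2}\cdot 2^{1/2}\cdot(1/4)=1/2$.

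For the equality clause I would track tightness through the argument. The only non-trivial inequality used is $\tra{\rho_{\lambda_*}H\sab}\ge g\sab$, and this is an equality precisely when $\rho_{\lambda_*}$ is a ground state of $H\sab$. Undoing the dilation $D_{\lambda_*}$ and the initial centring $W(y_0,z_0)$ shows that the bound is saturated exactly by the states obtained from a ground state of $H\sab$ by a dilatation and a phase-space translation, and conversely every such state saturates it. When $H\sab$ has a non-degenerate ground state this is precisely the statement in the theorem; for $\alpha=\beta=2$ that ground state is the Gaussian, and one recovers the minimum-uncertainty Gaussians up to affine phase-space transformations.

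The part that is more than bookkeeping is the analytic input on $H\sab=|\qhat|^\alpha+|\phat|^\beta$: that it is a genuine self-adjoint operator with discrete spectrum at the bottom (so $g\sab$ is attained) and that its ground state is unique. Discreteness should come from a Rellich-type compactness argument — a family bounded in the form $\| |\qhat|^{\alpha/2}\psi\|^2+\| |\phat|^{\beta/2}\psi\|^2$ is simultaneously controlled in the position variable and in the Fourier variable, hence precompact in $L^2(\R)$ — and is needed only for the equality part, the inequality itself using merely $\tra{\sigma H}\ge\inf\spec H$. Uniqueness of the ground state is the genuinely delicate point: for $\beta>2$ the semigroup $e^{-t|\phat|^\beta}$ is not positivity preserving, so the usual Perron--Frobenius argument via the Trotter product formula does not apply verbatim, and I would instead invoke the ground-state analysis underlying \cite{BLW2013b}. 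Accordingly, in a self-contained treatment I would present the inequality in full and, for the ``exactly when'' assertion, quote the existence and uniqueness of the ground state of $H\sab$.
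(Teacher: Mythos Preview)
The paper does not prove this theorem: it is stated and attributed to \cite{BLW2013b}, with no argument supplied in the present text. Your proof --- centring the state by a Weyl translation so that the infima defining the $\alpha$- and $\beta$-deviations sit at the origin, then optimising a one-parameter family of dilations against the variational lower bound $\tra{\sigma H\sab}\ge g\sab$ --- is correct and is the natural route; the algebra leading from the minimised expression $A^{\beta/(\alpha+\beta)}B^{\alpha/(\alpha+\beta)}(\alpha+\beta)\alpha^{-\alpha/(\alpha+\beta)}\beta^{-\beta/(\alpha+\beta)}\ge g\sab$ to the stated constant $c\sab$ checks out. Your handling of the equality clause is also right in spirit: the only inequality used is the spectral lower bound for $\rho_{\lambda_*}$, so saturation forces $\rho_{\lambda_*}$ to be a ground state, and undoing the dilation and translation gives the stated orbit. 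You are correct to isolate existence and uniqueness of the ground state of $H\sab$ as the genuine analytic input, and your remark that the standard Perron--Frobenius argument via positivity improvement of $e^{-t|P|^\beta}$ fails for $\beta>2$ is well taken; in a self-contained write-up, citing \cite{BLW2013b} for this point (as the present paper implicitly does for the whole theorem) is appropriate.
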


The \emph{overall width} of $\prob$ (at confidence level $1-\veps$) is defined for $\veps\in[0,1)$ as
\begin{equation*}
W_\veps(\prob):=\inf\{w>0\,|\,\exists x\in\R: \,\prob([x-\tfrac w2,x+\tfrac w2])\ge 1-\veps\}.
\end{equation*}
This quantity is finite for any $\veps>0$. For the overall width of the distribution of an observable
$\sfe$ on $\bre$ in a state $\rho$ we will write $\W_\veps(\sfe,\rho):=W_\veps(\rho^\sfe)$. This describes
the extent to which the quantity described by $\sfe$ can be approximately localised. As shown in
\cite{UH1985}, the overall width is generally a more stringent measure of the spread of a distribution 
than the standard deviation.

In analogy to the uncertainty relation (\ref{eqn:state-ur}) for standard deviations,
the overall widths of the position and momentum distributions in a state $\rho$ 
also satisfy a trade-off relation: for positive 
$\veps_1,\veps_2$   satisfying $\veps_1+\veps_2<1$ there is a constant $K(\veps_1,\veps_2)>0$ such that
\begin{equation}\label{eqn:ow-ur}
W_{\veps_1}(\sfq,\rho)\cdot W_{\veps_2}(\sfp,\rho)\ge 2\pi\hbar\,K(\veps_1,\veps_2).
\end{equation}
An uncertainty relations of this form was first  presented in a somewhat implicit way 
in the context of signal analysis by Landau and Pollak in 1961 \cite{LaPo61}. 
Its explicit form was given by Uffink in 1990 \cite{Uffink90}:
\begin{equation}\label{eqn:ow-U}
K(\veps_1,\veps_2)=\left(\sqrt{(1-\veps_1)(1-\veps_2)}-\sqrt{\veps_1\veps_2}\right)^2.
\end{equation}
A somewhat simpler (but weaker) bound was given in \cite{BuHeLa07} 
using  elementary arguments:
\begin{equation*}
\widetilde{K}(\veps_1,\veps_2)=\left(1-(\veps_1+\veps_2)\right)^2 \le K(\veps_1,\veps_2).
\end{equation*}
Note that the two expressions coincide on the `diagonal':
\begin{equation*}
K(\veps,\veps)=\widetilde{K}(\veps,\veps)=(1-2\veps)^2.
\end{equation*}

\section{Intrinsic unsharpness: resolution width}\label{sec:unsharpness}

For two noncommuting observables to be jointly measurable, it is necessary that they are unsharp.
One expects intuitively that the required degree of their unsharpness depends on the extent of
their noncommutativity. 

We will see that two unsharp observables which approximate
position and momentum, respectively, cannot have arbitrarily small degrees of unsharpness  if
they are to be jointly measurable. We will use the following measures as indicators  of the
unsharpness of an observable $\sfe$ on $\R$.

For an observable $\sfe$ with support $\su(\sfe)$ given by $\R$ or a closed interval, the \emph{resolution width} (at confidence level
$1-\veps$) is defined as \cite{CaHeTo06}:
\begin{equation*}
\gamma_\veps(\sfe):=\inf\{ w>0\,| \forall x\in \R\,\exists\rho\in S:  
\rho^\sfe([x-\tfrac w 2,x+\tfrac w 2])\ge 1-\veps  \}.\
\end{equation*}
For a sharp observable $\sfe$ on $\bre$ 
 the resolution width
is $\gamma_\veps(\sfe)=0$ for all $\veps\in(0,1)$. It is worth noting that vanishing resolution width does not
require the observable to be sharp: in fact, any observable whose nonzero effects have norm 1 
has zero resolution width; an example is given by the so-called canonical phase observable \cite{Heinonen-etal2003}.

For the resolution width of $\sfq^\mu,\sfp^\nu$ we obtain (see also \cite{CaHeTo06}):
\begin{equation*}
\gamma_{\veps_1}(\sfq^\mu)=W_{\veps_1}(\mu),\qquad
\gamma_{\veps_2}(\sfp^\nu)=W_{\veps_2}(\nu).
\end{equation*}

If a pair of observables $\sfq^\mu$, $\sfp^\nu$ is jointly measurable, 
their resolution widths are determined
by the probability measures $\mu=\mu_\bom,\nu=\nu_\bom$ which
obey the uncertainty relations (\ref{eqn:state-ur}) and (\ref{eqn:ow-ur}); we thus obtain:
\begin{equation*}
\gamma_{\veps_1}(\sfq^{\mu_\bom})\,\gamma_{\veps_2}(\sfp^{\nu_\bom})=
W_{\veps_1}(\sfq,\bom)\, W_{\veps_2}(\sfp,\bom)\ge 2\pi\hbar\,
K(\veps_1,\veps_2).\label{eqn:gam-gam-cov-ur}
\end{equation*}
The last inequality holds for any $\veps_1,\veps_2>0$ with $\veps_1+\veps_2<1$.

\section{Error measures I: Distance between observables}\label{sec:dist}

We review three distinct measures of error which quantify the difference
between an observable $\sfe$ on $\bre$ to be approximated and the approximator $\sff$,
which is also a POVM on $\bre$. Any error measure should be \emph{operationally significant} in the sense that it  quantifies the difference between the distributions $\rho^{\sff}$ and  $\rho^{\sfe}$. We begin with a family of metric error measures.

\subsection{Wasserstein $\alpha$-distance: definition.}

Next we briefly review a family of distances on the set of observables on $\R$ that was used in \cite{BLW2013b}
to formulate measurement uncertainty relations for canonically conjugate pairs of observables such as position and momentum. We adapt the presentation given there for general metric spaces to the case of $\R$.

For any two probability measures $\mu,\nu$ on $\R$ a {\it coupling} is defined to be a probability measure $\gamma$ on $\R\times \R$  with  $\mu$ and $\nu$ as the Cartesian marginals.
The set of couplings between $\mu$ and $\nu$ will be denoted $\Gamma(\mu,\nu)$. Then, for any $\alpha$, $1\leq \alpha<\infty$ the $\alpha$-distance (also Wasserstein $\alpha$-distance \cite{Villani}) of $\mu$ and $\nu$ is defined as
\begin{equation}\label{Wasserstein}
  \cD_\alpha(\mu,\nu)=  \inf_{\gamma\in\Gamma(\mu,\nu)}  \cD^\gamma_\alpha(\mu,\nu)=
  \inf_{\gamma\in\Gamma(\mu,\nu)}\left(\int |x-y|^\alpha\,d\gamma(x,y) \right)^{\frac1\alpha}
\end{equation}
For $\alpha=\infty$, one defines $\cD^\gamma_\infty(\mu,\nu)=\gamma-{\rm ess}\ \sup\{|x-y|\,|\, (x,y)\in\R\times\R\}$ and thus
\begin{equation}\label{Wassersteininfty}
   \cD_\infty(\mu,\nu) = \inf_{\gamma\in\Gamma(\mu,\nu)} \cD^\gamma_\infty(\mu,\nu).
\end{equation}
It turns out that $\cD^\gamma_\infty(\mu,\nu)$ actually depends only on the support of $\gamma$, that is, $\cD^\gamma_\infty(\mu,\nu)=\sup\{|x-y|\,|\, (x,y)\in\supp(\gamma)\}$.

The existence of an optimal coupling is known, for $1\leq\alpha<\infty$, see \cite[Theorem 4.1]{Villani}, the case $\alpha=\infty$ is shown in \cite[Theorem 2.6]{Jylha2014}, but it does not imply that  $D_\alpha(\mu,\nu)$  is finite.

When $\nu=\delta_y$ is a point measure, there is only one coupling between $\mu$ and $\nu$, namely the product measure $\gamma=\mu\times\delta_y$. In that case (\ref{Wasserstein}) and (\ref{Wassersteininfty}) describe the deviation of the measure $\mu$ from a point $y$. In particular,  the Wasserstein distances between point measures are seen to be extensions of the given metric for points, interpreted as point measures.
The metric can become infinite, but the triangle inequality still holds \cite[after Example~6.3]{Villani}. The proof relies on Minkowski's inequality and the use of a ``Gluing Lemma'' \cite{Villani}, which builds a coupling from $\mu$ to $\zeta$ out of couplings from $\mu$ to $\nu$ and from $\nu$ to $\zeta$. It also covers the case $\alpha=\infty$, which is not otherwise treated in \cite{Villani}.

We can now define the \emph{(Wasserstein) $\alpha$-distance between observables} $\sfe,\sff$ on $\R$:
\begin{equation*}
\Delta_\alpha(\sfe,\sff):=\sup_{\rho\in S}\cD_\alpha(\rho^\sfe,\rho^\sff).
\end{equation*}
These distances are operationally significant and  global error measures,
taking into account the largest possible deviations between corresponding probability measures of the 
observables being compared.

\subsection{Working with Wasserstein distances: Kantorovich duality.}

A powerful tool for working with the distance functions is a dual expression of the infimum over couplings as a supremum over certain other functions obtained by the Kantorovich duality.
In this context we exclude the case $\alpha=\infty$.

First we note that the ``gap inequality''
\begin{equation}\label{gap}
 \int\Phi(y)\,d\nu(y) \ -\  \int\Psi(x)\,d\mu(x) \leq \int |x-y|^\alpha\,d\gamma(x,y)
\end{equation}
holds for any pair of functions $(\Psi,\Phi)$ and any coupling $\gamma$ whenever the constraint
\begin{equation}\label{compete}
  \Phi(y)-\Psi(x)\leq |x-y|^\alpha
\end{equation}
is satisfied.
The Kantorovich Duality Theorem asserts that the gap is actually closed:
\begin{equation}\label{DKanto}
  \cD_\alpha(\mu,\nu)^\alpha=\sup_{\Phi,\Psi} \left\{ \int\Phi(y)\,d\nu(y) - \int\Psi(x)\,d\mu(x)\right\}
\end{equation}
where functions $\Phi$ and $\Psi$ satisfy (\ref{compete}).

When maximizing the left hand side of (\ref{gap}), one can naturally choose $\Phi$ as large as possible under the constraint (\ref{compete}), i.e.,
$\Phi(y)=\inf_x\{\Psi(x)+|x-y|^\alpha \}$, and similarly for $\Psi$. Hence one can choose just one variable $\Phi$ or $\Psi$ and determine the other by this
formula. In the case $\alpha=1$ the triangle inequality for the metric on $\R$ entails that one can take $\Phi=\Psi$. In this case (\ref{compete}) just asserts that
this function is Lipshitz continuous with respect to the metric on $\R$, with constant $1$. The left hand side of (\ref{gap}) is thus a difference of
expectation values of the given measures $\mu$, $\nu$.

It is of interest to note that the duality gap still closes if the set of functions $\Phi,\Psi$ is further restricted. 
The natural condition is, first of all, that $\Psi\in L^1(\mu)$. The statement of Kantorovich Duality in \cite[Thm.~5.10]{Villani} 
includes that the supremum (\ref{DKanto}) is attained also when one  restricts the set of functions to bounded continuous functions. 
In \cite{BLW2013b} it is shown that this set can be further restricted 
to positive continuous functions of compact support without changing the value of the supremum.

\subsection{Properties of the 1-distance.}

We now specialise to the case of the Wasserstein 1-distance ($\alpha=1$), also known as the  Monge metric. This was the choice of metric for the first formulation of a rigorous measurement uncertainty relation for position and momentum in \cite{Werner04}.

Denoting by $\Lambda$ the set of Lipshitz functions, that is, the bounded measurable functions $h:\R\to\R$ for which $|h(x)-h(y)|\leq |x-y|$, 
the Wasserstein-1 distance $\cD_1$ then becomes \cite{Werner04}
\[
\cD_1(\mu,\nu)=\sup_{h\in\Lambda}\,\left|\int hd\mu-\int hd\nu\right|.
\]
This gives rise to a metric on the set of observables on $\bre$ as follows.
We first recall that for any bounded measurable
function $h:\R\to\R$, the integral $\int_\R h\,d\sfe$ defines (in the weak
sense) a bounded selfadjoint operator, which we denote by $\sfe[h]$.
Thus, for any vector state $\fii$ the number $\ip{\fii}{\sfe[h]\fii}=\int_\R h\,d\ip\fii{\sfe(x)\fii}$ is well-defined.

The Wasserstein 1-distance between observables $\sfe$ and $\sff$ can then be expressed as
\begin{equation*}
\Delta_1(\sfe,\sff) := \sup_{\rho\in S}\, \sup_{h\in\Lambda}\,
\biggl| \tr\left[\rho\,( \sfe[h]-\sff[h]) \right]\biggr|=\sup_{h\in\Lambda}\,\biggl\|{\sfe[h]-\sff[h]}\biggr\|.
\end{equation*}

An observable $\sfc$ will be called a \emph{metric approximation} to 
$\sfa$ if $\Delta_1(\sfa,\sfc)<\infty$.

\begin{example}\label{ex:triv-dist}
Any trivial observables $\sfe=\mu\,I$ on $\R$ has infinite
distance from sharp position $\sfq$: $d(\mu\,I,\sfq)=\infty$.

Take the family of functions $h_n(x)=n-|x-c_n-n|$ if $|x-c_n-n|\le
n$, and $h_n(x)=0$ otherwise; here $(c_n)_{n\in\N}$ is an increasing
sequence of positive numbers still  to be determined. Note that
$h_n\in\Lambda$.
We have $||\sfq[h_n]||=h_n(c_n+n)=n$, so this approaches infinity as
$n\to\infty$.

For a  trivial observable $\sfe$ we get
$\sfe[h_n]=\int h_nd\mu\,I=:\mu(h_n)\,I$.
We show that for a suitable choice of the sequence $c_n$, one obtains
$\|\sfe[h_n]\|=\mu(h_n) \to 0$ as $n\to\infty$.

Let $c_n$ be such that the set $K_n=(-\infty,c_n]$ has measure
$\mu(K_n)> 1-1/n^2$, so that $\mu(\R\setminus
K_n)=\mu((c_n,\infty))<1/n^2$. Then
\begin{equation*}
\|\sfe[h_n]\|=\mu(h_n)=\int_{c_n}^{c_n+2n} h_n(x)\mu(dx)
\le n\mu((c_n,\infty))<1/n.
\end{equation*}
By the triangle inequality for norms we get
\[
\|\sfq(h_n)-\sfe(h_n)\|\ge\|\sfq(h_n)\|-\|\sfe(h_n)\|> n-1/n.
\]
It follows that the distance $\Delta_1(\sfe,\sfq)=\infty$. \qed

\end{example}

Our next example exhibits  functions of position $\sfq$ that may or may not
be good metric approximations to $\sfq$.
\begin{example}\label{ex:bdd-dist}
Let $g$ be a bounded measurable function on $\R$. Then the distance
of $\sfq$ and $\sfq\circ g^{-1}$ is infinite, $\Delta_1(\sfq,\sfq\circ g^{-1})=\infty$. For a function $f(x)=x+g(x)$ where
$g$ is bounded, the distance is finite: $\Delta_1(\sfq,\sfq\circ f^{-1})=\sup(|g|)$.

\emph{Proof.} We will show that $||\sfq[h_n]-\sfq\circ g^{-1}[h_n]||\to\infty$ as $n\to\infty$
for a suitable sequence of functions $h_n\in\Lambda$. To this end we
use the inequality 
$$
||\sfq[h_n]-\sfq\circ g^{-1}[h_n]||\ge \bigg| ||\sfq[h_n]|| -
||\sfq\circ g^{-1}[h_n]|| \bigg|,
$$ 
and  choose $h_n$ such that
$||\sfq[h_n]||\to\infty$ as $n\to\infty$,  while $||\sfq\circ g^{-1}[h_n]||$ will
remain bounded.

Let $|g(x)|\le g_0$.
Choose $h_n(x)=n-|x-n|$ if $|x-n|\le n$ and $h_n(x)=0$ otherwise.
Then we have $h_n\in\Lambda$. Further,
$\sfq[h_n]=\int h_n(x) \sfq(dx)$, so $\no{\sfq[h_n]}=n\to\infty$ as $n\to\infty$.
Next, we see that for $n > g_0$
\begin{equation*}
\sfq\circ g^{-1}[h_n]=\int h_n(t)\sfq\circ g^{-1}(dt)=\int h_n(g(x))\sfq(dx)
\end{equation*}
is a bounded operator since then $|h_n(g(x))|\le g_0$, and so
$\no{\sfq\circ g^{-1}[h_n]}\le g_0$.

To verify the second claim, we note that ${\sfq[h]-\sfq\circ f^{-1}[h]}=h(Q)-h(f(Q))$, and so for $h\in\Lambda$ and unit vector $\fii$,
\begin{eqnarray*}
\ip\fii{\bigl[h(Q)-h(f(Q))\bigr]^2\fii}&=&\int|\fii(x)|^2\left[h(x)-h(f(x))\right]^2dx\\
&\le&\int|\fii(x)|^2\left(x-f(x)\right)^2dx\\
&=&
\int|\fii(x)|^2g(x)^2dx\le\no{g(Q)}^2=(\sup|g|)^2,
\end{eqnarray*}
from which the claim follows. \qed

\end{example}

\begin{example}\label{ex:approx-pos-dist}
For smeared position and momentum observables $\sfq^\mu$,  $\sfp^\nu$,
the distances from $\sfq$ and $\sfp$ are
\begin{equation*}
\Delta_1(\sfq^\mu,\sfq)=\int |q|\,\mu(dq),\quad  \Delta_1(\sfp^\nu,\sfp)=\int |p|\,\nu(dp) .
\end{equation*}
(See \cite{Werner04}.) Thus, $\sfq^\mu$ and $\sfp^\nu$ are metric approximations
of $\sfq$ and $\sfp$
exactly  when these integrals are finite. 
\end{example}

\subsection{Measurement uncertainty relations for metric errors}\label{sec:distance}

The measurement uncertainty relation for the metric error associated with the 1-deviation (or Monge metric) 
proven in \cite{Werner04} has recently been generalised to all Wasserstein distances \cite{BLW2013b}. 
\begin{theorem}
\label{thm:main}
Let $M$ be a  phase space observable and $1\leq\alpha,\beta\leq\infty$. Then
\[
  \Delta_\alpha(\sfm_1,\sfq)\, \Delta_\beta(\sfm_2,\sfp)\               \geq\ c\sab\hbar
\]
provided that the quantities on the left hand side are finite. The constants $c\sab$ are the same as in 
Theorem~\ref{thm:prepURpq}.
\end{theorem}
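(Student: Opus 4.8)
The plan is to reduce the measurement uncertainty relation to the preparation uncertainty relation of Theorem~\ref{thm:prepURpq}, following the strategy that underlies the proofs in \cite{Werner04,BLW2013b}. Since the left-hand side is assumed finite, $\sfm_1$ is a metric ($\alpha$-)approximator of $\sfq$ and $\sfm_2$ a ($\beta$-)approximator of $\sfp$, so in particular $\sfm_1$ and $\sfm_2$ have finite $\alpha$- and $\beta$-distance from sharp position and momentum. The first step is to observe that joint measurability of $\sfm_1,\sfm_2$ together with their being metric approximators of $\sfq,\sfp$ forces them, by an averaging (symmetrisation) argument, to be marginals of a \emph{covariant} phase space observable $\sfg^\bom$ without increasing either error: one translates $\sfm$ by $W(q,p)$, forms the phase-space average $\int W(q,p)^*\sfm(\,\cdot\,+(q,p))W(q,p)\,dq\,dp$ in the appropriate weak sense, and checks that the Wasserstein distances $\Delta_\alpha(\,\cdot\,,\sfq)$ and $\Delta_\beta(\,\cdot\,,\sfp)$ are convex and translation-covariant, hence do not increase under this averaging. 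By Theorem~\ref{thm:jm-smeared-qp} the averaged marginals are then smeared observables $\sfq^{\mu_\bom},\sfp^{\nu_\bom}$.

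The second step computes the two errors for this covariant case. For a smeared observable one has $\rho^{\sfq^{\mu_\bom}}=\rho^\sfq*\mu_\bom$, and a standard coupling argument (transport the mass of $\rho^\sfq$ by the independent displacement drawn from $\mu_\bom$) gives $\cD_\alpha(\rho^\sfq,\rho^{\sfq^{\mu_\bom}})\le\Delta_\alpha(\mu_\bom)$, with the supremum over $\rho$ attained in the limit of sharply peaked states, so that $\Delta_\alpha(\sfq^{\mu_\bom},\sfq)=\Delta_\alpha(\mu_\bom)$ — the $\alpha$-deviation of the smearing measure itself; likewise $\Delta_\beta(\sfp^{\nu_\bom},\sfp)=\Delta_\beta(\nu_\bom)$. (For $\alpha=1$ this is exactly Example~\ref{ex:approx-pos-dist}.) Thus the product of errors for the original $\sfm$ is bounded below by $\Delta_\alpha(\mu_\bom)\,\Delta_\beta(\nu_\bom)$.

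The third step invokes the structural fact recorded after \eqref{gen-cov-obs}: $\mu_\bom=\bom_\varPi^\sfq$ and $\nu_\bom=\bom_\varPi^\sfp$ are the position and momentum distributions of the genuine quantum state $\bom_\varPi=\varPi\bom\varPi^*$. Applying Theorem~\ref{thm:prepURpq} to this state gives
\[
\Delta_\alpha(\mu_\bom)\,\Delta_\beta(\nu_\bom)=\Delta_\alpha(\bom_\varPi^\sfq)\,\Delta_\beta(\bom_\varPi^\sfp)\ge c\sab\hbar,
\]
which chains together with the previous two steps to yield $\Delta_\alpha(\sfm_1,\sfq)\,\Delta_\beta(\sfm_2,\sfp)\ge c\sab\hbar$. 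The constants are literally the preparation-uncertainty constants, as claimed.

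I expect the main obstacle to be the first step — justifying the covariant reduction rigorously. One must show the phase-space average converges to a well-defined POVM (a weak-operator integral against a positive normalised operator density), that its marginals are the averages of the marginals of $\sfm$, and, crucially, that $\rho\mapsto\cD_\alpha(\rho^\sfe,\rho^\sff)$ behaves well enough (joint convexity in the observable, invariance under the conjugation by $W(q,p)$) that the supremum over states does not grow. A secondary technical point is handling $\alpha=\infty$ or $\beta=\infty$, where Kantorovich duality is unavailable and one argues directly with the essential supremum of $|x-y|$ over the support of the optimal coupling; here one uses that $\cD_\infty$ depends only on supports, as noted after \eqref{Wassersteininfty}, together with the corresponding $\infty$-version of Theorem~\ref{thm:prepURpq}.
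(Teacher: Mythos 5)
Your proposal follows exactly the route the paper indicates for Theorem~\ref{thm:main}: reduction to the covariant case by phase-space averaging (using convexity and translation covariance of the Wasserstein distances, as in \cite{Werner04,BLW2013b}), followed by an application of the preparation uncertainty relation of Theorem~\ref{thm:prepURpq} to the state $\bom_\varPi$. One small correction: $\Delta_\alpha(\sfq^{\mu_\bom},\sfq)$ equals $\left(\int|q|^\alpha\,\mu_\bom(dq)\right)^{1/\alpha}$, the $\alpha$-th moment about the origin (cf.\ Example~\ref{ex:approx-pos-dist}), which is only $\ge\Delta_\alpha(\mu_\bom)$ rather than equal to the $\alpha$-deviation; since the inequality points in the direction you need, your chain of estimates is unaffected.
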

The proof is analogous to that of Werner's original theorem for $1$-deviations: it proceeds by reduction 
to the covariant case, and the latter is immediately obtained by application of the preparation uncertainty 
relation of Theorem \ref{thm:prepURpq}.

The case where one of the distances is zero is in fact covered by the Theorem:  the other 
distance must then be infinite. In fact if (in the case of the 1-deviation) one has 
$\Delta_1(\sfm_1,\sfq)=0$, then $\sfm_1=\sfq$ and $\Delta_1(\sfm_2,\sfp)$ cannot be finite; otherwise
the associated covariant phase space observable would have to have $\sfq$ as its first marginal,
which is impossible. Hence Theorem \ref{thm:main} implies that whenever $\Delta_1(\sfm_1,\sfq)=0$
then $\Delta_1(\sfm_2,\sfp)=\infty$. It is an instructive exercise to verify this explicitly.

\begin{example}\label{ex3}
Let $\sfm$ be an observable on phase space $\R^2$ whose first marginal
$\sfm_1$ is  sharp position $\sfq$. Then the second
marginal $\sfm_2$ has infinite 1-distance from sharp momentum
$\sfp$.

\emph{Proof.} We note first that all positive operators (effects)
$\sfm_2(X)$ in the range of $\sfm_2$ commute with  $\qhat$ (see, e.g., \cite{CRQM}) and are thus
functions of $\qhat$. Thus one can write
$\sfm_2(X)=\int \sfq(dq)m(q,X)$, where the functions $m(\cdot,X)$ are
defined almost everywhere for all (Borel) subsets $X$ of $\R$, and
$X\mapsto m(q,X)$ is then a probability measure. We consider states
$\rho$ with the same fixed position distribution, $\rho^\sfq=\prob$,
and compute
\[
\tra{\rho \sfm_2(h)}=\int h(x)m_\prob(dx),\ m_\prob(X):=\int \prob(dq)m(q,X).
\]
We will let $h$ run through a family $h_n\in\Lambda$ and $\rho$ through a family
$\p_{\rho_n}\in S_q$ such that $\p_{\rho_n}^\sfq=\prob$ and  $\tra{\rho_n \sfm_2(h_n)}\to 0$, while $\tra{\rho_n\sfp(h_n)}\to\infty$.
This shows that $\Delta_1(\sfm_2,\sfp)=\infty$.

Choose $h_n$ as in Example \ref{ex:triv-dist}, where we have now $\mu=m_\prob$. This gives
$\tra{\rho_n \sfm_2(h_n)}\to 0$ for any $\rho_n$ (yet to be specified) with $\p_{\rho_n}^\sfq=\prob$.

Let $\rho_n=W(0,c_n+n-(c_1+1))\rho_1W(0,c_n+n-(c_1+1))^*$, with $\rho_1$ a state whose
momentum distribution is centered symmetrically at $c_1+1$, the peak location of $h_1$. Then
the momentum distribution of $\rho_n$ is centered at the peak location $c_n+n$ of $h_n$.
Also note that $\p_{\rho_n}^\sfq=\p_{\rho_1}^\sfq=:\prob$.
Specifically we take $\rho_n$ such that the densities $\p_{\rho_n}^\sfp(p)=\chi_{J_n}(p)$,
$J_n=[c_n+n-1/2,c_n+n+ 1/2]$. Then we have $\tra{\rho_n\sfp(h_n)}= n-1/4\to\infty$ as $n\to\infty$.
\qed
\end{example}

\section{Error measures II: error bar width}

\subsection{Gross error bar.}

We now present a definition of measurement error in terms of likely error intervals
that follows most closely the usual practice of calibrating measuring instruments. In the 
process of {\em calibration} of a measurement scheme, one seeks to obtain estimates of the
likely error and perhaps also the degree of disturbance that  the scheme contains. In order to 
estimate the error, one tests the device by applying it to a sufficiently large family of input 
states in which the observable one wishes to measure with this setup has fairly sharp values. 
The error is then characterised as an overall measure of the bias and 
the  width of the output distribution across a range of input values. Error bars give 
the minimal average interval lengths that one has to allow to contain all output values with a 
given confidence level.

For simplicity, we give the following definitions only for approximations of a sharp observable $\sfe$,
so that the assumption of localised input states $\rho$ can be described as $\rho^\sfe(J_{x;\delta})=1$, for
intervals $J_{x;\delta}:=[x-\delta/2,x+\delta/2]$, $x\in\R,\delta>0$.

Let $\sfe_1,\sfe$ be observables on $\R$ and $\sfe$ be sharp.
For each $\varepsilon\in(0,1)$, $\delta>0$, we define the \emph{error} of $\sfe_1$ relative to $\sfe$

\begin{eqnarray}
\dele{\sfe_1}:=&\inf\{w>0\,|\ \forall\ x\in\R\ \forall\rho\in S:\nonumber\\
&\qquad\qquad\qquad \rho^\sfe(J_{x;\delta})=1\Rightarrow \rho^{\sfe_1}(J_{x,w})\ge 1-\veps\}.\nonumber
\end{eqnarray}
The error describes the range within which the input values can be inferred 
from the output distributions, with confidence level $1-\varepsilon$, given initial localizations within 
$\delta$.
$\sfe_1$ is called an $\veps$-\emph{approximation} to $\sfe$ if
$\dele{\sfe_1}<\infty$ for all $\delta>0$. 
Note that the error is an increasing function of $\delta$, so that one can
define the \emph{(gross) error bar width} of $\sfe_1$ relative to $\sfe$:
\begin{equation*}
\dee{\sfe_1}:=\inf_\delta\dele{\sfe_1}=\lim_{\delta\to 0}\dele{\sfe_1}.
\end{equation*}
In the case $\dele{\sfe_1}=\infty$ for all $\delta>0$, we write $\dee{\sfe_1}=\infty$.

$\sfe_1$ will be called an \emph{approximation (in the sense of finite error bar width)} 
to $\sfe$ if $\dee{\sfe_1}<\infty$ for all $\veps\in(0,\frac 12)$.  The restriction to $\veps<\frac 12$
reflects the idea that a ``good" approximation should have confidence levels greater than $\frac 12$.

We note that if $\sfe_1$ is an approximation to $\sfe$, the map 
$\veps\mapsto \dele{\sfe_1}$ is a decreasing function of $\veps\in(0,\frac 12)$
for every $\delta>0$.

The following result shows that our definition is not empty.

\begin{proposition}\label{prop:smeared-pos-approx}
The smeared position and momentum observables $\sfq^\mu$, $\sfp^\nu$ are 
approximations (in the sense of finite error bar widths) to $\sfq$ and $\sfp$, respectively, for
any probability measures $\mu$, $\nu$.
\end{proposition}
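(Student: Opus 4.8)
The plan is to reduce the statement to a one-line estimate on convolutions. First I record that for a smeared position observable the induced probability measure is a convolution: for any state $\rho$,
\[
\rho^{\sfq^\mu}(Y)=\tra{\rho\,\sfq^\mu(Y)}=\int_\R\mu(Y-q)\,d\rho^\sfq(q)=(\rho^\sfq*\mu)(Y),
\]
and likewise $\rho^{\sfp^\nu}=\rho^\sfp*\nu$. The two cases are entirely parallel (interchange $\mu,\sfq$ and position-localization with $\nu,\sfp$ and momentum-localization), so I treat $\sfq^\mu$ relative to $\sfq$ and leave the momentum case to the reader.

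Next I fix $\veps\in(0,1)$ and $\delta>0$ and let $\rho$ be any state with $\rho^\sfq(J_{x;\delta})=1$, so that $\rho^\sfq$ is concentrated on $J_{x;\delta}=[x-\tfrac\delta2,x+\tfrac\delta2]$. The key elementary observation is the inclusion
\[
J_{x;w}-q\ \supseteq\ J_{0;\,w-\delta}:=\left[-\tfrac{w-\delta}2,\tfrac{w-\delta}2\right],
\]
valid for every $q\in J_{x;\delta}$ and every $w>\delta$ (immediate on writing $q=x+t$ with $|t|\le\tfrac\delta2$). Hence $\mu(J_{x;w}-q)\ge\mu(J_{0;\,w-\delta})$ for every $q$ in the support of $\rho^\sfq$, and integrating gives the lower bound
\[
\rho^{\sfq^\mu}(J_{x;w})=\int_{J_{x;\delta}}\mu(J_{x;w}-q)\,d\rho^\sfq(q)\ \ge\ \mu(J_{0;\,w-\delta})\,\rho^\sfq(J_{x;\delta})=\mu(J_{0;\,w-\delta}),
\]
which is uniform in $x$ and in the state $\rho$.

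To finish, I use that $\mu$ is a probability measure on $\R$, so $\mu(J_{0;s})\to1$ as $s\to\infty$; I pick $s_\veps>0$, depending only on $\mu$ and $\veps$, with $\mu(J_{0;s_\veps})\ge1-\veps$. Taking $w=\delta+s_\veps$ above yields $\rho^{\sfq^\mu}(J_{x;w})\ge1-\veps$ for every $x$ and every admissible $\rho$, so $\W_{\veps,\delta}(\sfq^\mu,\sfq)\le\delta+s_\veps<\infty$; thus $\sfq^\mu$ is an $\veps$-approximation to $\sfq$. Since $\delta\mapsto\W_{\veps,\delta}(\sfq^\mu,\sfq)$ is increasing and bounded above by $\delta+s_\veps$, which tends to $s_\veps$ as $\delta\to0$, we get $\W_{\veps}(\sfq^\mu,\sfq)=\lim_{\delta\to0}\W_{\veps,\delta}(\sfq^\mu,\sfq)\le s_\veps<\infty$ for every $\veps\in(0,\tfrac12)$, which is the assertion. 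There is no deep obstacle here: the entire content is that the constraint $\rho^\sfq(J_{x;\delta})=1$ forces the output distribution to be the convolution of a measure supported on an interval of length $\delta$ with $\mu$, whose spread is governed by the tails of $\mu$ alone; the only point needing a little care is that the lower bound on $\rho^{\sfq^\mu}(J_{x;w})$ must be made genuinely independent of the interval centre $x$ and of the state, which is exactly what the inclusion $J_{x;w}-q\supseteq J_{0;w-\delta}$ secures.
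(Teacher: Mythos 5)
Your proof is correct and follows essentially the same route as the paper's: both exploit the convolution identity $\rho^{\sfq^\mu}=\rho^\sfq*\mu$ and choose $w$ large enough that an interval carrying $\mu$-mass at least $1-\veps$ fits inside $J_{x;w}-q$ for all $q$ in the localization interval. The only cosmetic difference is that you centre the $\mu$-mass interval at the origin (giving $w=\delta+s_\veps$), whereas the paper allows an arbitrary centre $q_0$ and compensates with $w\ge 2|q_0|+w_0+\delta$; your version additionally makes explicit the passage to the limit $\delta\to 0$, which the paper leaves implicit.
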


\emph{Proof.} It is sufficient to consider the case of the position observable.
Let $\veps\in(0,1), \delta>0$ be given. We have to show that there is a finite number $w>0$ 
such that for all $q\in\R$ one has  $\rho^{\sfq^\mu}(\iqw)\ge1-\veps$ whenever
$\rho^\sfq(\iqd)=1$.

Let $q_0,w_0$ be such that $\mu(J_{q_0;w_0})\ge 1-\veps$. Then, if 
$w\ge 2|q_0|+w_0+\delta$, it follows that $\iqd\subseteq x+\iqw$ for all $x\in J_{q_0;w_0}$,
that is, $\rho^\sfq(x+\iqw)=1$ for all such $x$. Then:
\begin{eqnarray*}
\rho^{\sfq^\mu}(\iqw)&=\int\mu(dx)\rho^\sfq(x+\iqw)\ge
\int_{J_{q_0;w_0}}\mu(dx)\rho^\sfq(x+\iqw)\\
&=\mu(J_{q_0;w_0})\ge1-\veps.\qed
\end{eqnarray*}


\subsection{Properties of the error bar width}

It is not hard to construct approximations of $\sfq$ that do not share the translation covariance
of $\sfq$.

\begin{example}\label{ex:noncov-approx}
Let $f$ be a continuous function on $\R$ which is one-to-one and such that $f(q)-q$ is not constant but
$|f(q)-q|\le \alpha$ for
all $q\in \R$ and some fixed $\alpha>0$. An example is $f(q)=q+\frac 12\cos(q)$. Let $\sfq^\mu$ be a smeared position
observable. Then $\sfq^\mu\circ f^{-1}$ is a non-covariant approximation to $\sfq$ in
the sense of finite error bars.

\emph{Proof.}
Let $\veps\in(0,1)$and $\delta>0$ be given. We have to show that there is a finite positive $w$ such 
that for all $q\in\R$ and all $\rho$ with $\rho^\sfq(\iqd)=1$, then 
$\rho^{\sfq^\mu}(f^{-1}(\iqw))\ge1-\veps$.

We know that $\sfq^\mu$ is an approximation to $\sfq$. Hence there is $w'>0$ such that
for all $q\in\R$ and all $\rho$ with $\rho^\sfq(\iqd)=1$, we have 
$\rho^{\sfq^\mu}((\intv{q}{w'}))\ge1-\veps$.

Now take $w=w'+2\alpha$. This entails that $f^{-1}(\iqw)\supseteq\intv{q}{w'}$ for all $q\in\R$. Then, for
$q\in\R$ and $\rho$ such that $\rho^\sfq(\iqd)=1$ we obtain
\begin{equation*}
\rho^{\sfq^{\mu}\circ f^{-1}}(\iqw)=\rho^{\sfq^\mu}(f^{-1}(\iqw))\ge \rho^{\sfq^\mu}(\intv{q}{w'})\ge 1-\veps.
\end{equation*} 
Noting that $f^{-1}(\iqw+q')\ne f^{-1}(\iqw)+q' $ (since $f(q)-q$ is not constant) one concludes readily that 
$\sfq^{\mu}\circ f^{-1}$ is not covariant. \qed  
\end{example}

\begin{example}\label{ex:const-inf-err}
For any bounded Borel function $f$ on $\R$, the observable $\sfq\circ f^{-1}$  has infinite error 
bars with respect to $\sfq$.

\emph{Proof.}
Let $J$  be  a bounded interval which contains the range of $f$.  Then for any finite $w>0$, one can find
$q$ such that $\iqw\cap J=\emptyset$. Then $f^{-1}(\iqw)=\emptyset$ and so
$\rho^{\sfq\circ f^{-1}}(\iqw)=\rho^\sfq(f^{-1}(\iqw))=0$ for all $\rho$. 

It follows that $\W_{\varepsilon,\delta}(\sfq\circ f^{-1},\sfq)=\infty$ for all $\veps\in(0,1)$ and all $\delta>0$.
\qed
\end{example}

It is possible to characterise the case of an accurate
measurement of the sharp observable $\sfe$.
\begin{proposition}\label{prop:zero-err}
Let $\sfe_1$ be an approximation of the sharp observable $\sfe$. Then the following are equivalent:
\begin{enumerate}
\item[(a)] $\dele{\sfe_1}\le\delta$ for all $\veps\in(0,\frac12),\delta>0$;
\item[(b)] $\sfe_1=\sfe$.
\end{enumerate}
If either of these condition is fulfilled then $\dee{\sfe_1}=0$ for all $\veps\in(0,\frac12)$.
\end{proposition}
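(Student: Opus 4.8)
The plan is to prove the equivalence by showing $(b)\Rightarrow(a)$ trivially and $(a)\Rightarrow(b)$ by a localisation argument, then deduce the final claim about $\dee{\sfe_1}$ from $(a)$.

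First I would handle $(b)\Rightarrow(a)$: if $\sfe_1=\sfe$ then for any state $\rho$ with $\rho^\sfe(J_{x;\delta})=1$ we immediately have $\rho^{\sfe_1}(J_{x;\delta})=1\ge 1-\veps$, so $w=\delta$ witnesses $\dele{\sfe_1}\le\delta$ for every $\veps\in(0,\tfrac12)$ and every $\delta>0$.

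The substantive direction is $(a)\Rightarrow(b)$. Assume $\dele{\sfe_1}\le\delta$ for all $\veps\in(0,\tfrac12),\delta>0$. I want to conclude that for every state $\rho$ and every Borel set, $\rho^{\sfe_1}$ and $\rho^\sfe$ agree; since $\sfe$ is sharp (projection valued), it suffices to show $\sfe_1(X)=\sfe(X)$ on intervals, and for this it is enough to probe with eigenstate-like localisations. Fix $x\in\R$ and $\delta>0$ and take any $\rho$ with $\rho^\sfe(J_{x;\delta})=1$; hypothesis $(a)$ gives, for every $\veps\in(0,\tfrac12)$, that $\rho^{\sfe_1}(J_{x;w})\ge 1-\veps$ for all $w>\delta$ (using that the error is increasing in $\delta$ and $\le\delta$, hence $\le w$). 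Letting $\veps\to 0$ yields $\rho^{\sfe_1}(\overline{J_{x;w}})=1$ for every $w>\delta$, hence $\rho^{\sfe_1}(J_{x;\delta})=1$ by taking $w\downarrow\delta$ along a sequence and using continuity of the measure from above. Thus any state supported (via $\sfe$) in $J_{x;\delta}$ is also supported (via $\sfe_1$) in $J_{x;\delta}$. Now I would use this for a shrinking family of intervals: given a point $x$, apply the conclusion to $J_{x;\delta_n}$ with $\delta_n\to 0$ and to a sequence of states $\rho_n$ with $\rho_n^\sfe(J_{x;\delta_n})=1$. Since $\sfe$ is sharp, $\sfe(J_{x;\delta_n})$ is a nonzero projection for small $\delta_n$ whenever $x\in\supp(\sfe)$, so such $\rho_n$ exist; then $\rho_n^{\sfe_1}(J_{x;\delta_n})=1$ forces $\rho_n$ to lie in the range of $\sfe_1(J_{x;\delta_n})$ as well. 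Intersecting, $\sfe(J_{x;\delta_n})\hi\subseteq\sfe_1(J_{x;\delta_n})\hi$; more precisely, for any unit vector $\fii$ with $\sfe(J_{x;\delta_n})\fii=\fii$ one gets $\ip{\fii}{\sfe_1(J_{x;\delta_n})\fii}=1$, whence $\sfe_1(J_{x;\delta_n})\fii=\fii$ since $0\le\sfe_1(J_{x;\delta_n})\le I$. This gives the projection inequality $\sfe(J)\le\sfe_1(J)$ for all bounded intervals $J$ with rational-type endpoints, hence for all intervals by monotone limits. Applying the same reasoning to complements and using normalisation $\sfe(\R)=\sfe_1(\R)=I$ together with $\sfe(J^c)\le\sfe_1(J^c)$ yields $\sfe_1(J)\le\sfe(J)$, so $\sfe_1(J)=\sfe(J)$ on intervals, and therefore $\sfe_1=\sfe$ on $\bre$ by the uniqueness of the extension of a POVM from a generating semiring.

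Finally, the last sentence follows at once: under either condition we have $(a)$, so $\dele{\sfe_1}\le\delta$ for all $\delta>0$ and all $\veps\in(0,\tfrac12)$, and passing to the infimum over $\delta$ gives $\dee{\sfe_1}=\inf_\delta\dele{\sfe_1}=0$.

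\textbf{Expected main obstacle.} The delicate point is converting the $\veps$-confidence, $w>\delta$ statements of $(a)$ into the exact set-theoretic inclusion $\sfe(J)\hi\subseteq\sfe_1(J)\hi$ on the \emph{same} interval $J$: this requires carefully interchanging the limits $\veps\to 0$ and $w\downarrow\delta$ and invoking continuity of probability measures from above, and then leveraging sharpness of $\sfe$ to guarantee an adequate supply of localised test states $\rho$. One must also be slightly careful at the boundary $\veps=\tfrac12$ (the hypothesis only gives $\veps\in(0,\tfrac12)$, which is harmless since we let $\veps\to0$) and at points $x\notin\supp(\sfe)$, where $\sfe(J_{x;\delta})=0$ makes the implication vacuous but then $\sfe_1$ on a neighbourhood of $x$ is pinned down by the behaviour on the rest of $\R$ via normalisation.
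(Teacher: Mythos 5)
Your proof is correct and follows essentially the same route as the paper's: probe $\sfe_1$ with states that $\sfe$ localises in a closed interval $J$, let $\veps\to0$ to force $\ip{\fii}{\sfe_1(J)\fii}=1$ and hence the operator inequality $\sfe(J)\le\sfe_1(J)$, then extend by monotone limits and use normalisation to upgrade to equality on a generating class of sets. Your passage to the reverse inequality via complements, and your explicit handling of the limits $\veps\to0$ and $w\downarrow\delta$, are only cosmetically different from (and if anything slightly more careful than) the paper's extension to open and half-open intervals.
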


\emph{Proof.} Assume (b) holds. Let $\veps\in(0,\frac12)$, $\delta>0$.
Choose $w=\delta$; then for any $q\in\R$ and
any state $\rho$ with $\rho^\sfe(\ixd)=1$, we also have
$\rho^{\sfe_1}(\ixd)\ge 1-\veps$. This shows that $\dele{\sfe_1}\le\delta$.

Conversely, assume that (a) holds. Consider any
$\veps\in(0,\frac12),\delta>0$. For $w=\dele{\sfe_1}\le\delta$, we have, for
all $x\in\R$ and all $\rho$ with $\rho^\sfe(\ixd)=1$, that
$\rho^{\sfe_1}(\ixd)\ge\rho^{\sfe_1}(\intv{x}{w})\ge 1-\veps$. This entails for any vector state
$\fii$ for which $\sfq(\ixd)\fii=\fii$ that
$\ip{\fii}{\sfe_1(\ixd)\fii}\ge 1-\veps$. As this holds for any
$\veps\in(0,\frac12)$, it follows that $\ip{\fii}{\sfe_1(\ixd)\fii}=1$. This
entails that $\sfe(\ixd)\le \sfe_1(\ixd)$. Since $x\in\R$ and
$\delta>0$ are arbitrary, this operator inequality holds for any closed
interval $J=[a,b]$.

We show that then also $\sfe((a,b))\le E_1((a,b))$ for any open
interval.  Let $J_n$ be an increasing sequence of closed sets which
converges to a given open interval $(a,b)$. Put
$D_n:=\sfe_1(J_n)-\sfe(J_n)\ge O$. For any POVM $\sfn$ on $\bre$ we have
$\sfn(J_n)\to \sfn((a,b))$ (ultraweakly).
(This is a consequence of the regularity of Borel measures
on $\bre$, see, e.g.,  \cite{NOST}.) 
So we obtain $D_n\to \sfe_1((a,b))-\sfe((a,b))$ (ultraweakly),
and since  $D_n\ge O$, this limit operator is also nonnegative. In
this way we conclude that $\sfe(K)\le \sfe_1(K)$ for all open
intervals $K$. Similarly we can show that $\sfe((a,b])\le \sfe_1((a,b])$. 
Due to the normalization of both POVMs $\sfe,\sfe_1$, it follows that they must
coincide on all intervals and finally, since the intervals generate $\bre$, that they are identical. \qed

We remark that it is not known whether  the condition $\dee{\sfe_1}=0$ for all $\veps\in(0,\frac12)$
is sufficient to conclude that $\sfe_1=\sfe$.

\begin{proposition}\label{prop:errbar-res}
Let $\sfe_1,\sfe$ be observables with support $\R$, and $\sfe$ be a sharp observable. 
The error bar width of $\sfe_1$ relative to $\sfe$ is never smaller than the  intrinsic
resolution width of $\sfe_1$: 
\begin{equation*}
\dee{\sfe_1}\ge \gamma_\veps(\sfe_1).
\end{equation*}
\end{proposition}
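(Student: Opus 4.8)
The plan is to show that any width $w$ eligible for the infimum defining $\dee{\sfe_1}$ is also eligible for the infimum defining $\gamma_\veps(\sfe_1)$, so that the latter infimum is taken over a larger set and hence cannot be larger. Recall that
\[
\gamma_\veps(\sfe_1)=\inf\{w>0\mid \forall x\in\R\ \exists\rho\in S:\ \rho^{\sfe_1}(\intv{x}{w})\ge 1-\veps\},
\]
while $\dee{\sfe_1}=\lim_{\delta\to 0}\dele{\sfe_1}$, where $\dele{\sfe_1}$ is the infimum of those $w$ for which, for every $x$ and every $\rho$ with $\rho^\sfe(\intv{x}{\delta})=1$, one has $\rho^{\sfe_1}(\intv{x}{w})\ge 1-\veps$.

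First I would fix $\delta>0$ and a width $w$ admissible for $\dele{\sfe_1}$, and check that $w$ is admissible for $\gamma_\veps(\sfe_1)$. Given $x\in\R$, since $\sfe$ is sharp and has support $\R$, the spectral projection $\sfe(\intv{x}{\delta})$ is nonzero, so there exists a unit vector $\fii$ with $\sfe(\intv{x}{\delta})\fii=\fii$; the corresponding state $\rho=\kb{\fii}{\fii}$ satisfies $\rho^\sfe(\intv{x}{\delta})=1$. By admissibility of $w$ for $\dele{\sfe_1}$ we get $\rho^{\sfe_1}(\intv{x}{w})\ge 1-\veps$. Since $x$ was arbitrary, $w$ satisfies the defining condition for $\gamma_\veps(\sfe_1)$, hence $\gamma_\veps(\sfe_1)\le w$. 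Taking the infimum over all admissible $w$ gives $\gamma_\veps(\sfe_1)\le\dele{\sfe_1}$, and this holds for every $\delta>0$; letting $\delta\to 0$ yields $\gamma_\veps(\sfe_1)\le\dee{\sfe_1}$.

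The only point requiring care — and the nearest thing to an obstacle — is the nonemptiness of the localized-state set: one needs $\sfe(\intv{x}{\delta})\neq 0$ for every $x\in\R$ and every $\delta>0$, which is exactly guaranteed by the hypothesis $\supp(\sfe)=\R$. (If $\dele{\sfe_1}=\infty$ for some $\delta$, the inequality $\gamma_\veps(\sfe_1)\le\dele{\sfe_1}$ is trivial, and likewise if $\dee{\sfe_1}=\infty$; so the argument covers all cases.) Everything else is a direct comparison of the two infima over the same parameter $w$, with the $\sfe_1$-condition appearing verbatim in both definitions, so no further estimates are needed.
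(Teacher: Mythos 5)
Your proof is correct and is essentially the argument the paper relies on (it defers to \cite[Prop.~1]{BuPe07}, where the same comparison of the two infima is made): sharpness and full support of $\sfe$ guarantee a state exactly localized in each $\intv{x}{\delta}$, which turns the universally quantified condition defining $\dele{\sfe_1}$ into a witness for the existential condition defining $\gamma_\veps(\sfe_1)$. No gaps.
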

The proof is given in \cite[Prop. 1]{BuPe07}.

\begin{corollary}
Let $\sfe$ be a sharp observable on $\bre$ with support $\R$. Any $\veps$-approximation $\sfe_1$ 
(supported on $\R$) of
$\sfe$ has finite resolution width, $\gamma_\veps(\sfe_1)<\infty$.
\end{corollary}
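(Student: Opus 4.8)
The plan is to chain together two results already established in the excerpt. By hypothesis, $\sfe_1$ is an $\veps$-approximation to $\sfe$, which by definition means that $\dele{\sfe_1}<\infty$ for every $\delta>0$. Taking the infimum over $\delta$, the gross error bar width satisfies $\dee{\sfe_1}=\inf_\delta\dele{\sfe_1}<\infty$. Since $\sfe$ is a sharp observable with support $\R$ and $\sfe_1$ is also supported on $\R$, Proposition~\ref{prop:errbar-res} applies and gives the inequality $\dee{\sfe_1}\ge\gamma_\veps(\sfe_1)$. Combining these two facts yields $\gamma_\veps(\sfe_1)\le\dee{\sfe_1}<\infty$, which is the assertion.

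So the corollary is essentially immediate once one unpacks the definition of ``$\veps$-approximation'' and invokes Proposition~\ref{prop:errbar-res}. The one point that requires a moment's care is the logical quantifier structure: the definition of $\veps$-approximation asks for $\dele{\sfe_1}<\infty$ for all $\delta>0$, and we need to ensure the finiteness survives passage to the infimum/limit as $\delta\to0$. This is guaranteed because the error is an increasing function of $\delta$ (as noted in the excerpt right after the definition of error bar width), so $\dee{\sfe_1}=\inf_{\delta>0}\dele{\sfe_1}\le\dele[\delta_0]{\sfe_1}$ for any fixed $\delta_0>0$, which is finite by hypothesis; hence $\dee{\sfe_1}<\infty$ as well. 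There is no real obstacle here — the substantive work is all contained in Proposition~\ref{prop:errbar-res} (proved in \cite{BuPe07}), and the corollary merely records the trivial-but-worth-stating consequence that having finite error bars forces finite intrinsic resolution width. If one wanted to state this slightly more sharply, one could observe that the hypothesis ``$\veps$-approximation'' only needs to hold for the particular $\veps$ under consideration, and indeed the proof above uses only that $\dele[\delta_0]{\sfe_1}<\infty$ for this $\veps$ and some single $\delta_0>0$.
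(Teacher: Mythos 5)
Your argument is correct and is exactly the route the paper intends: the corollary is an immediate consequence of Proposition~\ref{prop:errbar-res} combined with the definition of $\veps$-approximation, using that $\dee{\sfe_1}$ is an infimum of finite quantities and hence finite. Your closing observation that a single $\delta_0>0$ with finite error already suffices is a correct (minor) sharpening.
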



\subsection{Bias-free error and bias}

We show next how the gross error can be decomposed into a (positive) bias term and
a random error.
Let $\veps\in(0,1)$ and $\delta>0$ be given. Let $\sfe_1,\sfe$ be observables on $\R$ and $\sfe$ be sharp. 
Note that the condition $\rho^\sfe(\ixd)=1$ (for some $x\in\R$) can be expressed as 
$W_0(\rho^\sfe)\le\delta$.
We define the \emph{bias-free}, or {\em random error} $\deleo$ as follows:
\begin{equation*}
\deleo:=\sup\big\{W_\veps(\rho^{\sfe_1})\,|\, W_0(\rho^\sfe)\le\delta\big\}.
\end{equation*}
This is a measure of the overall minimal error, determined by the overall widths of all
output distributions, given input distributions supported in intervals $\iqd$.
If this quantity is finite for some $\delta_0$, it is an increasing function for all $\delta\le\delta_0$.
In that case we can define the \emph{bias-free error bar width},
\begin{equation*}
\deeo:=\lim_{\delta\to 0}\deleo.
\end{equation*}
The following is obvious:
\begin{equation*}
\dele{\sfe_1}\ge\deleo.
\end{equation*}
If these quantities are finite, one then has in the limit $\delta\to 0$:
\begin{equation}\label{unbiased-biased}
\dee{\sfe_1}\ge\deeo.
\end{equation}

The difference between $\dele{\sfe_1}$ and $\deleo$ disappears when the output 
distributions are concentrated at the locations of the input distributions, that is, around 
the intervals $\ixd$. This is to say that the difference is a measure of the overall magnitude
of the \emph{bias} $\bede{\sfe_1}$ inherent in $\sfe_1$ relative to $\sfe$:
\begin{equation*}
\bede{\sfe_1}:=\dele{\sfe_1}-\deleo\ge 0.
\end{equation*}
Rephrasing this as 
\begin{equation*}
\dele{\sfe_1}=\deleo+\bede{\sfe_1},
\end{equation*}
we see that the gross error is decomposed into the bias-free error and the magnitude of the
bias.  Note that one can take the limit of $\delta\to 0$:
\begin{equation*}
\bee{\sfe_1}:=\dee{\sfe_1}-\deeo.
\end{equation*}

As an immediate consequence of these definitions, we can say that $\sfe_1$ is an 
$\veps$-approximation to $\sfe$ if and only if the bias and random errors are finite
for all $\delta>0$. 

\begin{proposition}\label{prop:sm-bf-err}
Let $\sfq^\mu$, $\sfp^\nu$ be smeared position and momentum observables. Then
\begin{equation*}
\w_{\veps_1}(\sfq^\mu,\sfq)=W_{\veps_1}(\mu),
\quad \w_{\veps_2}(\sfp^\nu,\sfp)=W_{\veps_2}(\nu).
\end{equation*}
\end{proposition}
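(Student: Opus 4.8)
The plan is to compute the bias-free error bar width $\w_{\veps_1}(\sfq^\mu,\sfq)$ directly from the definition, exploiting the translation covariance of $\sfq$ and the convolution structure of $\sfq^\mu$. By symmetry it suffices to treat the position case. Recall that $\deleo$ with $\sfe_1=\sfq^\mu$, $\sfe=\sfq$ is the supremum of $W_{\veps_1}(\rho^{\sfq^\mu})$ over all states $\rho$ with $W_0(\rho^\sfq)\le\delta$, i.e. with $\rho^\sfq$ supported in some interval of length $\le\delta$. The key identity to use is that $\rho^{\sfq^\mu}=\rho^\sfq*\mu$, so the output distribution is always a convolution of $\mu$ with a distribution of width at most $\delta$.

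First I would prove the upper bound $\w_{\veps_1,\delta}(\sfq^\mu,\sfq)\le W_{\veps_1}(\mu)+\delta$. Suppose $\rho^\sfq$ is supported in $J_{x;\delta}=[x-\delta/2,x+\delta/2]$, and let $y_0,w_0$ be such that $\mu(J_{y_0;w_0})\ge 1-\veps_1$ with $w_0$ arbitrarily close to $W_{\veps_1}(\mu)$. Then for the interval $J$ centered at $x+y_0$ of length $w_0+\delta$ one checks, using $\rho^{\sfq^\mu}(J)=\int\mu(dq)\,\rho^\sfq(J-q)$ and the fact that $J-q\supseteq J_{x;\delta}$ whenever $q\in J_{y_0;w_0}$, that $\rho^{\sfq^\mu}(J)\ge\mu(J_{y_0;w_0})\ge 1-\veps_1$; this is exactly the argument already used in the proof of Proposition~\ref{prop:smeared-pos-approx}. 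Hence $W_{\veps_1}(\rho^{\sfq^\mu})\le w_0+\delta$, and taking the supremum over $\rho$ and then the limit $\delta\to 0$ gives $\w_{\veps_1}(\sfq^\mu,\sfq)\le W_{\veps_1}(\mu)$. For the reverse inequality I would test with point-like input states: choose a sequence $\rho_n$ with $\rho_n^\sfq\to\delta_x$ weakly and $W_0(\rho_n^\sfq)\le\delta$ (e.g. $\rho_n^\sfq$ uniform on $J_{x;1/n}$ for $n$ large), so that $\rho_n^{\sfq^\mu}=\rho_n^\sfq*\mu\to\delta_x*\mu$, which is just $\mu$ translated by $x$. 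Since $W_{\veps_1}$ is translation invariant and lower semicontinuous under weak convergence (the set of $w$ admissible for the limit is, up to boundary effects, a limit of those admissible for $\rho_n^{\sfq^\mu}$), one gets $\liminf_n W_{\veps_1}(\rho_n^{\sfq^\mu})\ge W_{\veps_1}(\mu)$, hence $\w_{\veps_1,\delta}(\sfq^\mu,\sfq)\ge W_{\veps_1}(\mu)$ for every $\delta>0$, and a fortiori in the limit.

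The main obstacle is the semicontinuity/approximation argument in the lower bound: $W_\veps$ need not be continuous under weak convergence, because mass can sit exactly on the endpoints of the optimal interval, so one must be slightly careful. The clean way around this is to note that for any $w<W_{\veps_1}(\mu)$ one has $\mu(\bar J)<1-\veps_1$ for every closed interval $\bar J$ of length $w$; then for the slightly larger closed intervals $J'$ of length $w$ centered anywhere, $(\rho_n^\sfq*\mu)(J')\le\mu(J'+\tfrac1{2n}\cdot[-1,1])$ which converges to $\mu$ of a closed interval of length $w$, hence is eventually $<1-\veps_1$; this forces $W_{\veps_1}(\rho_n^{\sfq^\mu})\ge w$ for large $n$, and letting $w\uparrow W_{\veps_1}(\mu)$ finishes it. Once this is in hand, the position statement follows, and the momentum statement is obtained identically after conjugating by the Fourier--Plancherel transform (or simply by repeating the argument with $\sfp,\nu$ in place of $\sfq,\mu$), giving $\w_{\veps_2}(\sfp^\nu,\sfp)=W_{\veps_2}(\nu)$. \qed
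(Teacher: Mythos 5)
Your proof is correct, and its upper--bound half is identical to the paper's: both localise $\rho^\sfq$ in $J_{x;\delta}$, pick $J_{y_0;w_0}$ with $\mu(J_{y_0;w_0})\ge 1-\veps_1$, and use the convolution identity to show the output measure puts weight $\ge 1-\veps_1$ on an interval of length $w_0+\delta$, giving $\delqo{\sfq^\mu}\le W_{\veps_1}(\mu)+\delta$. Where you differ is the lower bound. The paper proves the stronger pointwise statement that $W_{\veps_1}(\rho^{\sfq^\mu})\ge W_{\veps_1}(\mu)$ for \emph{every} state $\rho$, with no localisation and no limit: if $w<W_{\veps_1}(\mu)$ then $\mu(J_{q';w})<1-\veps_1$ for all $q'$, so the integrand in $\rho^{\sfq^\mu}(J_{q';w})=\int\rho^\sfq(dx)\,\mu(J_{q';w}-x)$ is pointwise below $1-\veps_1$ and the integral cannot reach $1-\veps_1$; convolution can only spread a distribution, never concentrate it beyond $\mu$ itself. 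You instead test with a sequence of nearly point-localised inputs $\rho_n$ and argue by a semicontinuity-type estimate on $\rho_n^\sfq*\mu$. This works --- and you correctly identify and repair the genuine pitfall (the admissible set for $W_{\veps_1}$ need not behave well under weak limits), the repair amounting to the observation that for $n$ large one still has $w+1/n<W_{\veps_1}(\mu)$, so every interval of length $w+1/n$ has $\mu$-measure below $1-\veps_1$ --- but it is strictly more work than needed, and the extra generality of the paper's one-line argument (validity for all $\rho$, not just localised ones) is what makes the final sandwich $W_{\veps_1}(\mu)\le\delqo{\sfq^\mu}\le W_{\veps_1}(\mu)+\delta$ immediate. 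Both routes then conclude by letting $\delta\to 0$, and both handle momentum by the identical argument with $\sfp,\nu$.
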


\emph{Proof.} It suffices to consider the case of position.
We show first that $W_{\veps_1}(\rho^{\sfq^\mu})\ge W_{\veps_1}(\mu)$. This is equivalent to the following:
$w\ge W_{\veps_1}(\rho^{\sfq^\mu})$ implies $w\ge W_{\veps_1}(\mu)$.

Thus, let $w$ be such that $\rho^{\sfq^\mu}(\iqw)\ge 1-\veps_1$ for some $q\in\R$. 
Assume $w<W_{\veps_1}(\mu)$; this means that for all $q'\in\R$ one has $\mu(\intv{q'}{w})<1-\veps_1$.
But then 
\begin{equation*}
\rho^{\sfq^\mu}(\intv{q'}{w})=\int \rho^{\sfq}(dx)\mu(x-\intv{q'}{w})<1-\veps_1,
\end{equation*}
which contradicts the premise.

Next we show that whenever $W_0(\rho^{\sfq})\le\delta$, then 
$W_{\veps_1}(\rho^{\sfq^\mu})\le W_{\veps_1}(\mu)+\delta$.
We are given that $\rho^{\sfq}(\intv{q_0}{\delta})=1$ for some $q_0\in\R$. Assume 
$w\ge W_{\veps_1}(\mu)$, that is, $\mu(\intv{q_1}{w})\ge1-{\veps_1}$ for some $q_1$.We have to show
that $w+\delta\ge W_{\veps_1}(\rho^{\sfq^\mu})$, that is, 
$\rho^{\sfq^\mu}(\intv{q_2}{w+\delta})\ge1-\veps_1$
for some $q_2\in\R$.

Let $q_2=q_0-q_1$. Then it follows that $q+\intv{q_2}{w+\delta}\supseteq \intv{q_0}{\delta}$ for
all $q\in\intv{q_1}{w}$. Then
\begin{eqnarray*}
\rho^{\sfq^\mu}(\intv{q_2}{w+\delta})&=\int\mu(dq)\rho^{\sfq}(q+\intv{q_2}{w+\delta})\\
&\ge
\int_{\iqw}\mu(dq)=\mu(\iqw)\ge1-\veps_1.
\end{eqnarray*}
This shows that $w\ge W_{\veps_1}(\mu)$ implies $w+\delta\ge W_{\veps_1}(\rho^{\sfq^\mu})$
whenever $W_0(\rho^\sfq)\le\delta$. Thus, under this assumption we let $w$ approach 
$W_{\veps_1}(\mu)$ to obtain $W_{\veps_1}(\mu)+\delta\ge W_{\veps_1}(\rho^{\sfq^\mu})$.

To summarise, we have shown:
$W_\veps(\mu)\le W_\veps(\rho^{\sfq^\mu})\le W_\veps(\mu)+\delta$, where the latter inequality
holds if $W_0(\rho^\sfq)\le\delta$. This entails that also
$W_\veps(\mu)\le \w_{\veps,\delta}(\sfq^\mu,\sfq)\le W_\veps(\mu)+\delta$. Now we can 
take the limit $\delta\to 0$ to obtain the result. \qed

\subsection{Measurement uncertainty relations for error bar widths}\label{sec:errorbar}

The following error relations for covariant approximations are special cases of the general result quoted below.
Their proofs are straightforward consequences of the considerations of this paper, hence we present them here
as separate statements.

\begin{proposition}
Let $\sfg^\bom$ be a covariant phase space observable. Then the bias-free error bar widths 
of the marginals relative to $\sfq$ and $\sfp$ obey the trade-off relation:
\begin{equation}\label{unbiased-ur}
\deq{\sfq^{\mu_\bom}}\,\dep{\sfp^{\nu_\bom}} \ge \w_{\veps_1}(\sfq^{\mu_\bom},\sfq)\, \w_{\veps_2}(\sfp^{\nu_\bom},\sfp)
 \ge 2\pi\hbar\,K(\veps_1,\veps_2),
\end{equation}
where $K(\veps_1,\veps_2)$ is given by Eq.~(\ref{eqn:ow-U}).
\end{proposition}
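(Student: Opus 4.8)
The plan is to prove the two inequalities in (\ref{unbiased-ur}) separately, working from right to left. The second inequality, $\w_{\veps_1}(\sfq^{\mu_\bom},\sfq)\, \w_{\veps_2}(\sfp^{\nu_\bom},\sfp) \ge 2\pi\hbar\,K(\veps_1,\veps_2)$, should be essentially immediate from the machinery already assembled: by Proposition \ref{prop:sm-bf-err} the bias-free error bar widths of the smeared observables equal the overall widths of the smearing measures, $\w_{\veps_1}(\sfq^{\mu_\bom},\sfq)=W_{\veps_1}(\mu_\bom)$ and $\w_{\veps_2}(\sfp^{\nu_\bom},\sfp)=W_{\veps_2}(\nu_\bom)$. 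Now since $\sfg^\bom$ is a covariant phase space observable, its smearing measures are precisely the position and momentum distributions of the state $\bom_\varPi$: $\mu_\bom=(\bom_\varPi)^\sfq$ and $\nu_\bom=(\bom_\varPi)^\sfp$. Hence $W_{\veps_1}(\mu_\bom)\,W_{\veps_2}(\nu_\bom)= W_{\veps_1}(\sfq,\bom_\varPi)\,W_{\veps_2}(\sfp,\bom_\varPi)$, and applying the overall-width preparation uncertainty relation (\ref{eqn:ow-ur}) with Uffink's constant (\ref{eqn:ow-U}) to the state $\bom_\varPi$ gives the bound $2\pi\hbar\,K(\veps_1,\veps_2)$. (It is worth noting in passing that this is exactly the inequality displayed — via the equalities $\gamma_{\veps_i}=W_{\veps_i}$ — in the resolution-width discussion of Section~\ref{sec:unsharpness}, so no new preparation relation is needed.)

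For the first inequality, $\deq{\sfq^{\mu_\bom}}\,\dep{\sfp^{\nu_\bom}} \ge \w_{\veps_1}(\sfq^{\mu_\bom},\sfq)\, \w_{\veps_2}(\sfp^{\nu_\bom},\sfp)$, I would invoke the general domination of the gross error by the bias-free error established earlier in the paper, namely $\dele{\sfe_1}\ge\deleo$ for each $\delta$, which passes to the limit $\delta\to 0$ as in (\ref{unbiased-biased}) to give $\dee{\sfe_1}\ge\deeo$ whenever the relevant quantities are finite. Applying this with $\sfe_1=\sfq^{\mu_\bom}$, $\sfe=\sfq$ at confidence level $\veps_1$, and with $\sfe_1=\sfp^{\nu_\bom}$, $\sfe=\sfp$ at level $\veps_2$, yields $\deq{\sfq^{\mu_\bom}}\ge\w_{\veps_1}(\sfq^{\mu_\bom},\sfq)$ and $\dep{\sfp^{\nu_\bom}}\ge\w_{\veps_2}(\sfp^{\nu_\bom},\sfp)$; multiplying these two nonnegative inequalities gives the claim. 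Here Proposition \ref{prop:smeared-pos-approx} guarantees all four widths are finite, so the passage to the limit is legitimate and the multiplication is unproblematic.

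The only real subtlety — and the step I would present most carefully — is making sure the identification $\mu_\bom=(\bom_\varPi)^\sfq$, $\nu_\bom=(\bom_\varPi)^\sfp$ is used correctly and that the overall-width relation (\ref{eqn:ow-ur})--(\ref{eqn:ow-U}) is genuinely available in the form stated (for arbitrary density operators, including mixed ones, which $\bom_\varPi$ may well be). The excerpt's preparation relation (\ref{eqn:ow-ur}) is stated for a general state $\rho$, so this is fine; one just has to note that the parity conjugation $\bom\mapsto\bom_\varPi$ takes density operators to density operators, so (\ref{eqn:ow-ur}) applies verbatim. Beyond that, the proof is a two-line chaining of results already in hand, and I would write it as such, emphasizing that the covariant case reduces the measurement uncertainty relation entirely to the preparation uncertainty relation for overall widths, exactly as in the analogous arguments of \cite{Werner04,BuHeLa07}.
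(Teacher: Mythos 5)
Your proposal is correct and follows essentially the same route as the paper: the first inequality from the general domination $\dee{\sfe_1}\ge\deeo$ of Eq.~(\ref{unbiased-biased}), and the second by combining Proposition~\ref{prop:sm-bf-err} with the overall-width preparation uncertainty relation applied to the state generating the covariant observable (the paper cites this as Eq.~(\ref{eqn:gam-gam-cov-ur})). Your extra remark that the parity conjugate $\bom_\varPi$ is again a density operator, so the preparation relation applies verbatim, is a harmless and correct elaboration of the same argument.
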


{\em Proof.}  The first inequality follows from (\ref{unbiased-biased}) and the second 
is a direct consequence  of Proposition \ref{prop:sm-bf-err} and Eq.~(\ref{eqn:gam-gam-cov-ur}) 

The corresponding inequality for general phase space observables was proven in \cite{BuPe07}.

\begin{theorem}\label{thm}
Let $\sfm$ be an approximate joint observable for $\sfq,\sfp$, in the sense that its marginals have finite error bar
widths as approximations of position and momentum, respectively. Then,
for $\veps_1,\veps_2\in(0,\frac 12)$,
the error bar widths of $\sfm_1$ and $\sfm_2$ satisfy the
uncertainty relation
\begin{equation*}
\deq{\sfm_1}\cdot\dep{\sfm_2}\ge
 2\pi\hbar\,K(\veps_1,\veps_2),
\end{equation*}
where $K(\veps_1,\veps_2)$ is given by Eq.~(\ref{eqn:ow-U}).
\end{theorem}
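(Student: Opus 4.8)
The plan is to reduce the general case to the covariant case, mirroring the structure of the metric-error proof (Theorem~\ref{thm:main}) and Werner's original argument. Suppose $\sfm$ is an approximate joint observable for $\sfq,\sfp$ with finite error bar widths $\deq{\sfm_1}$ and $\dep{\sfm_2}$. The first and main step is an \emph{averaging} (symmetrisation) argument: starting from $\sfm$, one constructs a covariant phase space observable $\sfg^\bom$ whose marginal error bar widths are no larger than those of $\sfm_1,\sfm_2$. Concretely, one would average the shifted observables $W(q,p)\sfm(\,\cdot+(q,p))W(q,p)^*$ over phase space; because error bar width is defined through the behaviour of $\rho^{\sfm_i}$ on intervals and the Weyl translations act simply by shifting both the input localisation interval $J_{x;\delta}$ and the output interval $J_{x;w}$, each translated copy has the same error bar width as $\sfm$ itself, and a convexity/averaging estimate shows the averaged (hence covariant) observable can only do better. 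This yields a covariant $\sfg^\bom$ with $\deq{\sfg^\bom_1}\le\deq{\sfm_1}$ and $\dep{\sfg^\bom_2}\le\dep{\sfm_2}$.

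The second step is to invoke Theorem~\ref{thm:jm-smeared-qp}: a covariant phase space observable has marginals of the form $\sfq^{\mu_\bom}$ and $\sfp^{\nu_\bom}$. Then Proposition~\ref{prop:sm-bf-err} identifies the bias-free error bar widths of these marginals as the overall widths $W_{\veps_1}(\mu_\bom)$ and $W_{\veps_2}(\nu_\bom)$, which by the resolution-width identities equal $W_{\veps_1}(\sfq,\bom)$ and $W_{\veps_2}(\sfp,\bom)$ — here one uses that $\mu_\bom=\bom_\Pi^\sfq$, $\nu_\bom=\bom_\Pi^\sfp$ are the position and momentum distributions of a genuine state $\bom_\Pi$. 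The overall-width preparation uncertainty relation~(\ref{eqn:ow-ur}) with Uffink's constant~(\ref{eqn:ow-U}) then gives $W_{\veps_1}(\sfq,\bom)\,W_{\veps_2}(\sfp,\bom)\ge 2\pi\hbar\,K(\veps_1,\veps_2)$, exactly as recorded in Eq.~(\ref{eqn:gam-gam-cov-ur}).

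The third step assembles the chain of inequalities. Since the gross error bar width dominates the bias-free one (inequality~(\ref{unbiased-biased})), we have $\deq{\sfm_1}\ge\deq{\sfg^\bom_1}\ge\w_{\veps_1}(\sfq^{\mu_\bom},\sfq)=W_{\veps_1}(\sfq,\bom)$, and similarly for momentum. Multiplying the two and applying the covariant bound finishes the proof:
\[
\deq{\sfm_1}\cdot\dep{\sfm_2}\ \ge\ W_{\veps_1}(\sfq,\bom)\,W_{\veps_2}(\sfp,\bom)\ \ge\ 2\pi\hbar\,K(\veps_1,\veps_2).
\]

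The main obstacle is the first step — showing rigorously that the phase-space average of $\sfm$ produces a bona fide covariant observable whose error bar widths do not increase. One must check that the weak-operator-topology integral defining the averaged observable converges to a normalised POVM (this is where one leans on the representation theory cited around Eq.~(\ref{gen-cov-obs})), and, more delicately, that the ``$\forall x\,\forall\rho$'' quantifier structure in the definition of $\dele{\sfe_1}$ is compatible with averaging: one needs that if every translate $W(q,p)\sfm W(q,p)^*$ satisfies the defining implication with a common $w$, then so does the average, which follows because for a fixed input state the map $\rho\mapsto\rho^{\sfm_i}(J)$ is affine and the localisation hypothesis $\rho^\sfq(J_{x;\delta})=1$ transforms covariantly. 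Handling the limit $\delta\to0$ and the possibility of infinite intermediate quantities requires the same care as in Proposition~\ref{prop:zero-err}, but is routine once the covariant reduction is in place.
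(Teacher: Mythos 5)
Your proposal takes essentially the same route as the paper, which establishes the covariant case via Proposition \ref{prop:sm-bf-err}, inequality (\ref{unbiased-biased}) and the Landau--Pollak--Uffink bound, and reduces the general case to it by the averaging construction of \cite{Werner04} and Lemma 4 of \cite{BuPe07}. The only refinement to note is that the phase-space ``average'' must be an invariant mean rather than a literal normalised integral ($\R^2$ being non-compact), and it is precisely the finiteness of the error bars for localised inputs that prevents loss of mass and lets the resulting finitely additive object be replaced by a genuine $\sigma$-additive covariant POVM --- which is the content of the cited lemma rather than a gap in your outline.
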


This result entails the following statement: an
approximate joint observable for $\sfq,\sfp$ cannot have one of
these sharp observables as its marginal. It is instructive to show this explicitly by
considering the case $\sfm_1=\sfq$.

\begin{proposition}\label{sharp-marg}
Let $\sfm$ be an observable on phase space whose first marginal
coincides with sharp position, $\sfm_1=\sfq$ (so that $\deq{\sfm_1}=0$).
Then the second marginal $\sfm_2$ cannot satisfy the condition of an
$\veps_2$-approximation to $\sfp$ for any $\veps_2\in(0,\frac 12)$, that is,
$\dep{\sfm_2}=\infty$. Hence $\sfm$ cannot be an
$(\veps_1,\veps_2)$-approximate joint observable to $\sfq,\sfp$ for
any $\veps_1,\veps_2\in(0,\frac 12)$.
\end{proposition}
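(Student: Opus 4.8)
The plan is to follow the template of Example~\ref{ex3}, which handles the analogous statement for the Monge metric, and adapt it to the error-bar setting. The key structural fact to exploit is that if the first marginal $\sfm_1$ of a phase-space observable $\sfm$ equals sharp position $\sfq$, then every effect $\sfm_2(Y)$ in the range of the second marginal commutes with $\qhat$ (a standard consequence of the fact that $\sfm_1(X)=\sfq(X)$ is a projection commuting with all $\sfm(X\times Y)$; see, e.g., \cite{CRQM}). Hence $\sfm_2(Y)=\int \sfq(dq)\,m(q,Y)$ for a Markov kernel $m$, and for any state $\rho$ with position distribution $\rho^\sfq=\prob$ one has $\rho^{\sfm_2}(Y)=\int\prob(dq)\,m(q,Y)=:m_\prob(Y)$; note that $m_\prob$ depends only on $\prob$, not on the full state $\rho$. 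This is exactly the leverage needed: the output distribution of $\sfm_2$ is insensitive to the momentum localisation of the input, so one can keep the input sharply localised in position while pushing the momentum distribution out to infinity.

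First I would fix $\veps_2\in(0,\tfrac12)$ and $\delta>0$, and argue by contradiction: suppose $w:=\dep{\sfm_2}<\infty$. Then for every $p\in\R$ and every state $\rho$ with $\rho^\sfp(J_{p;\delta})=1$ we would have $\rho^{\sfm_2}(J_{p;w})\ge 1-\veps_2$. The goal is to violate this for a suitable family of well-chosen states. I would construct states $\rho_n$ whose position distribution is some fixed $\prob$ (say absolutely continuous with bounded density, or even a normalised truncation of a Gaussian — any fixed $\prob$ with $\prob(\R)=1$ works) while $\rho_n^\sfp=\chi_{J_n}$ with $J_n=[c_n-\tfrac12,c_n+\tfrac12]$ and $c_n\to\infty$; such states exist by applying a momentum translation $W(0,c_n)$ to a fixed reference state. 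By construction $\rho_n^\sfp$ is supported in an interval of length $\delta$ (take $\delta\ge1$, or rescale), so the $\veps_2$-approximation condition forces $\rho_n^{\sfm_2}(J_{c_n;w})\ge1-\veps_2$. But $\rho_n^{\sfm_2}=m_\prob$ is the \emph{same} probability measure for all $n$, independent of $c_n$, since all the $\rho_n$ share the same position distribution $\prob$. A fixed probability measure $m_\prob$ on $\R$ cannot satisfy $m_\prob(J_{c_n;w})\ge 1-\veps_2>\tfrac12$ for a sequence $c_n\to\infty$ (these intervals are eventually disjoint and their measures would sum to more than $1$). This contradiction shows $\dep{\sfm_2}=\infty$.

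Having established $\dep{\sfm_2}=\infty$ for every $\veps_2\in(0,\tfrac12)$ and every $\delta>0$, the conclusion $\deep{\sfm_2}=\inf_\delta \W_{\veps_2,\delta}(\sfm_2,\sfp)=\infty$ is immediate, so $\sfm_2$ is not an $\veps_2$-approximation to $\sfp$ for any such $\veps_2$. That $\sfm$ cannot be an $(\veps_1,\veps_2)$-approximate joint observable then follows directly from the definition, which requires \emph{both} marginals to have finite error bar width. (Alternatively one may simply invoke Theorem~\ref{thm} with $\deq{\sfm_1}=0$, but it is the explicit argument that is instructive here.)

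**Main obstacle.** The one genuinely delicate point is the commutativity claim $\sfm_2(Y)\in\{\qhat\}'$, which underpins the whole argument — without it the output distribution of $\sfm_2$ could depend on the momentum content of the input and the construction collapses. This is where I would be careful to cite the precise statement (it is the assertion that if a marginal of a joint POVM is projection valued, then all effects of the other marginal commute with those projections), and to note that $\sfm_1=\sfq$ being the \emph{full} spectral measure of $\qhat$ is what makes $\sfm_2$ a function of $\qhat$ via the functional calculus, yielding the kernel representation $m(q,Y)$. Everything after that is elementary measure-theoretic bookkeeping.
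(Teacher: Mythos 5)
Your proposal is correct and follows essentially the same route as the paper's proof: both rest on the observation that $\sfm_1=\sfq$ forces the effects $\sfm_2(Y)$ to commute with $\qhat$, so that $\rho^{\sfm_2}$ depends only on $\rho^{\sfq}$, and both then use momentum boosts $W(0,c_n)$ of a fixed momentum-localised state to show that a single fixed output measure would have to concentrate on intervals escaping to infinity, which is impossible. The only cosmetic difference is that you phrase this as a contradiction with eventually disjoint intervals while the paper argues directly that $\tra{\rho_0\,\sfm_2(J_{n;w'})}\to 0$; do just make sure the reference state's momentum distribution is supported in an interval of length at most $\delta$ for each given $\delta$, as the paper does, rather than fixing unit-length intervals.
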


\emph{Proof.} Let $\veps_2\in(0,\frac 12)$ be given and let $\delta>0$
and $w'>0$ be arbitrary. We have to show that there is an interval
$\ipd$ and a state $\rho$ localised in $\ipd$ so that $\tra{\rho
M_2(\ipw)}<1-\veps_2$.

As noted in Example \ref{ex3}, all positive operators (effects)
$\sfm_2(X)$ in the range of $\sfm_2$ commute with  $\qhat$  and are thus
functions of $\qhat$.
Thus we can write: $\sfm_2(X)=\int m(q,X)\,\sfq(dx)$. Consider the
sequence of intervals $J_{n;w'}$,  $n=0,1,2,\dots$. Since
$I=\sfm_2(\R)$, then $\sfm_2((-\infty,n-w'/2))\to1$ as $n\to\infty$
(ultraweakly), and it  follows that for every state $\rho$,
$\tra{\rho \sfm_2(J_{n;w'})}\le \tra{\rho \sfm_2([n-w'/2,\infty))}\to 0$,
hence:
\[
\tra{\rho \sfm_2(J_{n;w'})}=\int\rho^\sfq(dq)m(q,J_{n;w'})\to 0 \quad {\rm as}\ n\to\infty.
\]

Let $\rho_0$ be such that $\rho_0^\sfp(J_{0;\delta})=1$, that is, the
distribution $\rho_0^\sfp$ vanishes outside that interval. Then
$\rho_n:=W(0,n)\rho_0$ is localised in $J_{n;\delta}$, while the
position distribution is unchanged, $\rho_n^\sfq=\rho_0^\sfq$.

For the given $\veps_2\in(0,\frac 12)$, there is an $n\in\N$ such that for
the  fixed state $\rho_0$, $\tra{\rho_0 \sfm_2(J_{n;w'})}<1-\veps_2$.
Then, since $\rho_0^\sfq=\rho_n^\sfq$, we also have $\tra{\rho_n
\sfm_2(J_{n;w'})}<1-\veps_2$, whereas $\rho_n$ is localised in
$J_{n;\delta}$. \qed

\vspace{12pt}

This result reproduces, in particular, the well-known fact that
there is no observable on phase space whose marginals are sharp
position and sharp momentum.

\begin{example}
Example \ref{ex:noncov-approx} can be used to construct an observable  $M$ on phase space which is not 
covariant but is still an approximate joint observable for $\sfq,\sfp$. Let $\sfg^\bom$ be a
covariant phase space observable and define $\sfm:=\sfg^\bom\circ \gamma^{-1}$, where $\gamma(q,p):=
(\gamma_1(q),\gamma_2(p))$. We assume that $\gamma_1,\gamma_2$ are strictly increasing 
continuous functions such that $\gamma_1(q)-q$ and $\gamma_2(p)-p$ are bounded functions.
Then it follows  that the marginals $\sfm_1^\gamma=\sfg^\bom_1\circ\gamma_1^{-1}$ and
$\sfm_2^\gamma=\sfg^\bom_2\circ\gamma_2^{-1}$ have finite error bars with respect to $\sfq,\sfp$. If 
$\gamma$ is a nonlinear function then $\sfm$ will not be covariant.
\end{example}

It is straightforward to obtain a universal uncertainty relation for the bias-free errors for any
approximate joint observable $\sfm$ of $\sfq,\sfp$. The core of the proof is to show that finite
bias-free errors for the marginals entails the existence of a covariant observable $\sfg^\bom$ (obtained by
the operation of finite mean used in \cite{Werner04}) such that its marginals
are not greater than those of $\sfm$:
\begin{equation*}
\delqo{\sfm_1}\ge\delqo{\sfg^\bom_1},\quad \delpo{\sfm_2}\ge\delpo{\sfg^\bom_2}.
\end{equation*}
The proof of this is similar to that of Lemma 4 of \cite{BuPe07}  and will be omitted.
 Using inequality (\ref{unbiased-ur}), the bias-free errors is then seen to obey the trade-off relation for $\veps_1,\veps_2<\frac 12$: 
\begin{equation*}
\w_{\veps_1}(\sfm_1,\sfq)\, \w_{\veps_2}(\sfm_2,\sfp)\ge
{2\pi\hbar}\, K(\veps_1,\veps_2).
\end{equation*}

\subsection{Trade-off relations for resolution widths}

In the work \cite{BuPe07} we claimed the validity of an uncertainty relation for resolution widths; 
a proof was not given explicitly as it was considered to follow closely the steps of the proof of Theorem \ref{thm}. On revisiting this relation, we found that there is no obvious way of adapting that proof.
In fact, it may well be that it is only for sufficiently close joint approximations of $Q$ and $P$ 
that there have to be constraints on the resolution width similar to the error uncertainty relation.
Hence we rephrase the claim as a Problem.

\vspace{6pt}

\noindent{\bf Problem.} {\em 
Let observable $\sfm$ on $\brr$ be an approximate joint observable for $\sfq,\sfp$ in the sense of finite
error bar widths. State conditions on the quality of the approximation (other than the covariance of the joint observable) which entail that the resolution widths must obey the trade-off relation (for $\veps_1+\veps_2<1$):
\begin{equation*}
\gamma_{\veps_1}(\sfm_1)\,\gamma_{\veps_2}(\sfm_2)\ge 2\pi\hbar\,K(\veps_1,\veps_2)\quad (?)
\end{equation*}
}

\vspace{6pt}

\section{Error measures III: Noise-based error}


Classical statistical error analysis  is prominently based on the use of moments of probability distributions for
the quantification of measurement errors. Thus, a wide-spread approach
found in the literature of defining a measure of error is in terms of
a formal ``root mean square" deviation of an indicator variable $Z$ of the measuring 
apparatus from  the variable $A$ to be  measured approximately. Classically, $A$ and $Z$ are given as random variables, and quantum mechanically as selfadjoint operators: 
this  \emph{state-dependent noise-based error} is given as
the root mean square deviation, 
\begin{equation*}
\epno(A,\hM,\rho):=\langle (Z_{\rm out}-A_{\rm in})^2\rangle_{\rho\otimes\sigma}^{1/2}.
\end{equation*}
 Here $Z_{\rm out}$ denotes the output (pointer) observable at the end of the interaction
phase between object system and probe in the measurement $\mathcal{M}$, and $A_{\rm in}$ is the input object observable to be approximately measured; the object plus probe system is initially in the state $\rho\otimes\sigma$. The choice of name reflects the fact that the operator $Z_{\rm out}-A_{\rm in}$ is commonly called {\em noise operator}.

A detailed critique of this attempted quantum generalisation of the rms error is given in \cite{BLW2013a}; the main deficiency is that  this quantity
fails to be a faithful representation of the absence or magnitude of an approximation error. Therefore this measure has to be used with care; it is operationally significant only in some special circumstances; then it may be used to  provide estimates of measurement errors \cite{BLW2013a}.
Here we are concerned with the state-independent upper bound of the quantity $\epno$,
the \emph{(global) noise-based error} of a measurement $\mathcal{M}$ relative to $A$ as \cite{Appleby1998b}
\begin{equation*}
\epno(A,\mathcal{M}):=\sup_{\rho} \epno(A,\mathcal{M},\rho)
\end{equation*}
where the supremum is taken over all states $\rho$ for which the right hand side is well-defined.
We will say that the observable $\sfc$ defined by $\hM$ is a \emph{finite-noise approximation} to $A$ if
$\sfc$ has finite global noise-based error relative to $A$. 

The noise-based error $\epno(A,\mathcal{M},\rho)$ can be expressed in terms of the observable $\sfc$ actually measured by $\hM$
\cite{Ozawa04}:
\begin{equation}\label{OzaVeps1}
  \epno(A,\hM,\rho)^2= \tr{\rho(\sfc[x^2] -\sfc[x]^2)} + \tr{\rho (\sfc[x]-A)^2}.
\end{equation}

In order to apply this error measure in the case of joint approximate measurements, we note that if a measurement scheme $\hM$
defines an observable $\sfm$ on $\brr$, then its marginal observables $\sfm_1,\sfm_2$ can be taken as approximators for, say,
position $Q$ and momentum $P$, respectively. In this case the noise-based errors are defined via (\ref{OzaVeps1}) with $\sfc=\sfm_1$ for $\epno(Q,\hM,\rho)\equiv\epno(Q,\sfm_1,\rho)$ and with $\sfc=\sfm_2$ for $\epno(P,\hM,\rho)\equiv\epno(P,\sfm_2,\rho)$.
 We denote the global errors for the marginals of a general phase space observable by $\epno(Q,\sfm_1)$ and $\epno(P,\sfm_2)$, respectively. Then the following general result has been shown \cite{Appleby1998b}.

\begin{theorem}\label{thm:fin-st-err}
Let $\hM$ be a measurement realizing  an observable $\sfm$  on $\brr$. Then the
global noise-based errors obey the following trade-off relation.
\begin{equation*}
\epno(Q,\sfm_1)\,\epno(P,\sfm_2)\ge \frac \hbar 2.
\end{equation*}
The lower bound is realised for a covariant phase space observable $\sfg^\bom$ with $\bom$ being
the minimum uncertainty state operator with zero means of position and momentum.
\end{theorem}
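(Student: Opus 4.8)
The plan is to follow the template of Werner's argument (and of \cite{Appleby1998b}): settle the covariant case by a direct computation that reduces the bound to the preparation uncertainty relation of Theorem~\ref{thm:prepURpq}, and then show that the phase-space average of an arbitrary joint observable $\sfm$ is a covariant phase space observable whose marginals have noise-based errors no larger than those of $\sfm_1,\sfm_2$.

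For a covariant phase space observable $\sfg^\bom$ the first marginal is the smeared position $\sfq^{\mu_\bom}$, whose moment operators are, on the natural domains, $\sfg^\bom_1[x]=Q+\mu_\bom[x]\,\idty$ and $\sfg^\bom_1[x^2]=Q^2+2\mu_\bom[x]\,Q+\mu_\bom[x^2]\,\idty$, so that $\sfg^\bom_1[x^2]-\sfg^\bom_1[x]^2=\bigl(\mu_\bom[x^2]-\mu_\bom[x]^2\bigr)\idty$ is a multiple of the identity. Substituting into (\ref{OzaVeps1}) yields, for every admissible state $\rho$,
\begin{equation*}
\epno(Q,\sfg^\bom_1,\rho)^2=\bigl(\mu_\bom[x^2]-\mu_\bom[x]^2\bigr)+\mu_\bom[x]^2=\mu_\bom[x^2]=\tra{\bom_\varPi Q^2},
\end{equation*}
and likewise $\epno(P,\sfg^\bom_2,\rho)^2=\nu_\bom[y^2]=\tra{\bom_\varPi P^2}$; in particular these are state-independent and coincide with the global errors. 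Since $\tra{\bom_\varPi Q^2}\ge\Delta(Q,\bom_\varPi)^2$ and $\tra{\bom_\varPi P^2}\ge\Delta(P,\bom_\varPi)^2$, Theorem~\ref{thm:prepURpq} with $\alpha=\beta=2$ ($c_{22}=\tfrac12$) gives
\begin{equation*}
\epno(Q,\sfg^\bom_1)\,\epno(P,\sfg^\bom_2)=\sqrt{\tra{\bom_\varPi Q^2}\,\tra{\bom_\varPi P^2}}\ \ge\ \Delta(Q,\bom_\varPi)\,\Delta(P,\bom_\varPi)\ \ge\ \tfrac{\hbar}{2}.
\end{equation*}
Both inequalities become equalities exactly when $\bom_\varPi$ — hence $\bom$ — is a minimum-uncertainty state with vanishing expectations of $Q$ and $P$; as any such $\sfg^\bom$ is realised by a genuine measurement scheme, this is the asserted extremal case.

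For a general measurement $\hM$ realising $\sfm$ on $\brr$ I would assume both $\epno(Q,\sfm_1)$ and $\epno(P,\sfm_2)$ finite (else there is nothing to prove). The structural point is that, because the $\sfc[x]^2$ terms cancel, $\sfc\mapsto\epno(Q,\sfc,\rho)^2=\tra{\rho\,\sfc[x^2]}-\tra{\rho(\sfc[x]Q+Q\sfc[x])}+\tra{\rho Q^2}$ is \emph{affine} in the POVM $\sfc$. Form the phase-space average $\overline{\sfm}$ of $\sfm$ by applying an $\R^2$-invariant mean (a generalised limit of averages over growing squares) to the translates $\sfm^{(q,p)}(Z):=W(q,p)\,\sfm\bigl(Z-(q,p)\bigr)\,W(q,p)^*$. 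Then $\overline{\sfm}$ is a normalised POVM on $\brr$ that is translation covariant, hence of the form $\sfg^\bom$ for some positive trace-one $\bom$ by the classification recalled after~(\ref{gen-cov-obs}). Using that the outcome shift in $\sfm^{(q,p)}_1$ exactly compensates the shift $W(q,p)^*QW(q,p)=Q+q\idty$, one checks that the variance operator $\sfm^{(q,p)}_1[x^2]-\sfm^{(q,p)}_1[x]^2$ and the bias operator $\sfm^{(q,p)}_1[x]-Q$ are merely the $W(q,p)$-conjugates of those of $\sfm_1$, so that $\epno(Q,\sfm^{(q,p)}_1,\rho)^2=\epno\bigl(Q,\sfm_1,W(q,p)^*\rho W(q,p)\bigr)^2\le\epno(Q,\sfm_1)^2$. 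By affinity, $\epno(Q,\sfg^\bom_1,\rho)^2$ equals the same invariant mean of the numbers $\epno(Q,\sfm^{(q,p)}_1,\rho)^2$ and is therefore $\le\epno(Q,\sfm_1)^2$; taking the supremum over $\rho$ gives $\epno(Q,\sfg^\bom_1)\le\epno(Q,\sfm_1)$, and symmetrically $\epno(P,\sfg^\bom_2)\le\epno(P,\sfm_2)$. Combining with the covariant case, $\epno(Q,\sfm_1)\,\epno(P,\sfm_2)\ge\epno(Q,\sfg^\bom_1)\,\epno(P,\sfg^\bom_2)\ge\hbar/2$.

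The step I expect to be delicate is the averaging argument. Unlike a Wasserstein distance, the noise-based error is not manifestly invariant under phase-space translations, so one must use precisely the version of the translate in which the outcome shift cancels the conjugation shift $Q\mapsto Q+q\idty$; with the opposite sign the translated errors grow like $q^2$ and the average diverges. One must also verify that the invariant mean produces a genuine $\sigma$-additive normalised POVM (rather than a merely finitely additive object) and that interchanging the mean with the unbounded moment functionals $\sfc\mapsto\tra{\rho\,\sfc[x^2]}$ and $\sfc\mapsto\tra{\rho(\sfc[x]Q+Q\sfc[x])}$ is legitimate on the relevant domain of states — this is exactly the bookkeeping performed in the finite-mean construction of \cite{Werner04} and in Lemma~4 of \cite{BuPe07}, which I would import.
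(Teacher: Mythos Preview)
The paper does not actually supply a proof of this theorem; it simply quotes the result and cites \cite{Appleby1998b}. Your argument is therefore not competing with anything in the text, and it is the natural one: it follows exactly the reduction-to-covariance template the paper uses for the Wasserstein distances (Theorem~\ref{thm:main}) and for error bar widths (Theorem~\ref{thm}). The covariant computation is correct --- for $\sfg^\bom$ the noise-based error is state-independent and equals $\mu_\bom[x^2]^{1/2}$, so the bound reduces to the $\alpha=\beta=2$ preparation relation applied to $\bom_\varPi$, with equality precisely for the centred minimum-uncertainty state.

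The averaging step is sound in outline, and you have correctly rewritten $\epno(Q,\sfc,\rho)^2$ as an affine functional of the POVM, which is what makes the invariant-mean argument go through. The caveats you flag are real but not fatal: the sign convention in the translate $\sfm^{(q,p)}$ must be the one for which the outcome shift cancels the conjugation shift of $Q$ (your choice is the right one), and the exchange of the invariant mean with the unbounded moment functionals $\sfc\mapsto\tra{\rho\,\sfc[x^2]}$, $\sfc\mapsto\tra{\rho(\sfc[x]Q+Q\sfc[x])}$ requires the finiteness hypothesis $\epno(Q,\sfm_1)<\infty$ together with a density/domain argument of the type carried out in \cite{Werner04} and in Lemma~4 of \cite{BuPe07}. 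With those imported, the proof is complete.
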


\section{Connections}\label{connections}

We show that the concept of approximation based on
finite error bars generalises the notions of finite noise approximation
and metric approximations.

\begin{proposition}\label{prop:finite-dist-approx}
Any observable $\sfe_1$ on $\R$  that satisfies the condition 
$\Delta_\alpha(\sfe_1,\sfe)<\infty$ (for some $\alpha\in[1,\infty)$) for a sharp observable $\sfe$ on $\R$ is an 
approximation to $\sfe$ in the sense of finite error bars. In that case
the following inequality holds:
\begin{equation}\label{eqn:fin-dist-errbar}
\dee{\sfe_1}\le \frac{2}{\veps^{\frac1\alpha}}\,\Delta_\alpha(\sfe_1,\sfe)\,. 
\end{equation} 
\end{proposition}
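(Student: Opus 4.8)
The plan is to estimate the finite-$\delta$ error $\dele{\sfe_1}$ for each $\delta>0$ and then let $\delta\to 0$. Put $D:=\Delta_\alpha(\sfe_1,\sfe)$, assumed finite, fix $\veps\in(0,1)$ and $\delta>0$, and consider an arbitrary $x\in\R$ together with an arbitrary state $\rho$ satisfying $\rho^\sfe([x-\tfrac\delta2,x+\tfrac\delta2])=1$. The heart of the matter is a tail estimate for $\rho^{\sfe_1}$, obtained by transporting the sharply localised distribution $\rho^\sfe$ onto $\rho^{\sfe_1}$.

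First I would use the definition of $\Delta_\alpha$: since $\cD_\alpha(\rho^\sfe,\rho^{\sfe_1})\le D<\infty$, the existence of an optimal coupling (\cite[Thm.~4.1]{Villani}) gives a coupling $\gamma\in\Gamma(\rho^\sfe,\rho^{\sfe_1})$ with $\int|s-t|^\alpha\,d\gamma(s,t)\le D^\alpha$. Because its first marginal $\rho^\sfe$ is concentrated on $[x-\tfrac\delta2,x+\tfrac\delta2]$, the coupling $\gamma$ is concentrated on $[x-\tfrac\delta2,x+\tfrac\delta2]\times\R$, so $|s-x|\le\tfrac\delta2$ for $\gamma$-a.e.\ $(s,t)$. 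Hence, if $|t-x|>\tfrac w2$ and $|s-x|\le\tfrac\delta2$ then $|s-t|>\tfrac{w-\delta}2$, and therefore, for $w>\delta$,
\[
\rho^{\sfe_1}\big(\R\setminus[x-\tfrac w2,x+\tfrac w2]\big)
\ \le\ \gamma\big(\{(s,t):|s-t|>\tfrac{w-\delta}2\}\big)
\ \le\ \Big(\tfrac{2}{w-\delta}\Big)^{\alpha}D^\alpha ,
\]
the last step being Markov's inequality for the nonnegative function $|s-t|^\alpha$. The right-hand side is $\le\veps$ exactly when $w\ge 2D\,\veps^{-1/\alpha}+\delta$, and for each such $w$ we obtain $\rho^{\sfe_1}([x-\tfrac w2,x+\tfrac w2])\ge 1-\veps$, uniformly in $x\in\R$ and in all states $\rho$ with $\rho^\sfe([x-\tfrac\delta2,x+\tfrac\delta2])=1$.

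By the definition of the error this shows $\dele{\sfe_1}\le 2D\,\veps^{-1/\alpha}+\delta$ for every $\delta>0$; letting $\delta\to 0$ gives precisely (\ref{eqn:fin-dist-errbar}), and in particular $\sfe_1$ is an $\veps$-approximation to $\sfe$ for every $\veps\in(0,1)$, so certainly for all $\veps\in(0,\tfrac12)$, as asserted. The argument is essentially routine; the only step that needs a word of care is the choice, for each individual state $\rho$, of a coupling (nearly) realising the transport cost $\cD_\alpha(\rho^\sfe,\rho^{\sfe_1})$. This is supplied by the cited existence theorem, but one may just as well pick, for each $\rho$, a coupling of cost at most $(D+\eta)^\alpha$ and then let $\eta\to0$ at the end, which avoids any appeal to the existence of optimal couplings.
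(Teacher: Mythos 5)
Your argument is correct, and the bound you obtain, $\dele{\sfe_1}\le\delta+2\Delta_\alpha(\sfe_1,\sfe)\,\veps^{-1/\alpha}$, is exactly the one the paper derives before letting $\delta\to0$; but you reach it by a genuinely different route. The paper works entirely on the dual side of the Kantorovich problem: it constructs explicit ``tent'' functions $h_n$ (equal to $n^\alpha$ on $J_{x;\delta}$, decaying to $0$ over a distance $n$), checks that the pair $(\Psi_n,\Phi_n)=(h_n,h_n)$ satisfies the constraint $|\Phi(y)-\Psi(x)|\le|x-y|^\alpha$, and then uses the duality inequality to bound $\tra{\rho\,\sfe_1[g_n]}$ from below, with $g_n=h_n/n^\alpha$ squeezed between $\chi_{J_{x;\delta}}$ and $\chi_{J_{x;w}}$. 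You instead stay on the primal side: pick a (near-)optimal coupling $\gamma$ of $\rho^\sfe$ and $\rho^{\sfe_1}$, observe that $\gamma$ is supported in $J_{x;\delta}\times\R$ because its first marginal is, and apply Markov's inequality to $|s-t|^\alpha$ to control the mass of $\rho^{\sfe_1}$ outside $J_{x;w}$. Your version is arguably the more elementary one, since it uses only the defining infimum in (\ref{Wasserstein}) and a Chebyshev-type tail bound, and — as you note — the appeal to existence of an optimal coupling can be dispensed with by taking $\eta$-optimal couplings; it also makes transparent why the constant is $2\veps^{-1/\alpha}$ (it is just the Markov threshold). The paper's dual construction buys a statement that only ever manipulates expectation values $\tra{\rho\,\sfe[g_n]}$ of operators $\sfe[g_n]$, which fits the operator-algebraic framing used elsewhere in the paper (cf.\ the discussion of the restricted function classes for which the duality gap closes), but for this particular estimate nothing is lost by your direct approach. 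One small point of care, which you handle correctly: the inequality $\cD_\alpha(\rho^\sfe,\rho^{\sfe_1})\le\Delta_\alpha(\sfe_1,\sfe)$ must hold uniformly over all admissible $\rho$, which is exactly what the supremum in the definition of $\Delta_\alpha(\sfe_1,\sfe)$ provides.
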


\begin{proof} The proof is a straightforward adaptation of the proof for the case $\alpha=1$ given in \cite[Prop. 5]{BuPe07}.\\
 Using the definition of $\cD_\alpha(\rho^{\sfe_1},\rho^\sfe)$ and equation (\ref{DKanto}), we are given that
\begin{equation}\label{+}
\big|\tr{\rho \sfe_1[\Phi]}-\tr{\rho \sfe(\Psi)}\big|\le \Delta_\alpha(\sfe_1,\sfe)^\alpha=:c^\alpha,
\end{equation}
which holds for all $\rho\in S$ and all functions $\Psi,\Phi$ satisfying the constraint
\begin{equation}\label{++}
\biggl|\Phi(y)-\Psi(x)\biggr| \le |x-y|^\alpha,\quad x,y\in \R.
\end{equation}
Let $\veps\in(0,1)$ and $\delta>0$ be given. Put
$w=\delta+2n$, with $n\in\N,\ n^\alpha\ge c^\alpha/\veps$. 
Consider an interval $\iqd$ and a state $\rho$ with $\rho^\sfe(\iqd)=1$. Define
the functions $\Psi_n=\Phi_n\equiv h_n$ via
\[
h_n(x):=\left\{\begin{array}{ll} n^\alpha&{\rm if}\ \ |x-q|\le\delta/2;\\
\biggl[n+\delta/2-|x-q|\biggr]^\alpha&{\rm if}\ \ \delta/2<|x-q|\le \delta/2+n;\\
0&{\rm if}\ \ \delta/2+n<|x-q|.\\
\end{array}
\right.
\]
It is not hard to verify  that $\Psi_n,\Phi_n$ satisfy (\ref{++}). Condition $(\ref{+})$ for $\Psi_n=\Phi_n=h_n$ entails for
$g_n=h_n/n^\alpha$ that $\big|\tr{\rho {\sfe_1}[g_n]}-\tr{\rho \sfe[g_n]}\big|\le c^\alpha/n^\alpha$. We
then have 
$\chi_{\iqd}\le g_n\le \chi_{\iqw}$.

Now $\rho^\sfe(\iqd)=1$ implies $\tr{\rho \sfe(g_n)}=1$, and so, using the assumption
$n^\alpha\ge c^\alpha/\veps$,  we obtain
\[
\tr{\rho \sfe_1(\iqw)}\ge\tr{\rho \sfe_1(g_n)}\ge\tr{\rho \sfe(g_n)}-c^\alpha/n^\alpha\ge 1-\veps.
\]
To prove the inequality (\ref{eqn:fin-dist-errbar}), we note that on putting
$w=\delta+2c/(\veps^{1/\alpha})$, one still obtains $\tr{\rho \sfe_1(\iqw)}\ge
1-\veps$. This yields $\dele{\sfe_1}\le\delta+2\Delta_\alpha(\sfe_1,\sfe)/\veps^{1/\alpha}$, and on
letting $\delta$ approach 0, then (\ref{eqn:fin-dist-errbar}) follows.
\end{proof}

\begin{proposition}\label{prop:fin-st-err-err-bar}
Any observable $\sfe_1$ on $\R$ that satisfies the condition of finite global noise-based
error relative to a sharp observable with selfadjoint operator $A$ and associated spectral measure $\sfe$ (such that $A=\sfe[x]$), 
$\epno(A,\sfe_1)<\infty$, is an 
approximation to $\sfe$ in the sense of finite error bars. 
In that case, the  following inequality holds:
\begin{equation*}
\dee{\sfe_1}\le 2\epno(A,\sfe_1)\,\left(1+\sqrt{\frac{2}{\veps}}\right). 
\end{equation*}
\end{proposition}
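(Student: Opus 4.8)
The plan is to mimic the structure of Proposition~\ref{prop:finite-dist-approx}: given a state $\rho$ localised in $\iqd$ (so $\rho^\sfe(\iqd)=1$), I want to show that the output distribution $\rho^{\sfe_1}$ puts mass $\ge 1-\veps$ on an interval centred at $q$ whose half-width is controlled by $\epno(A,\sfe_1)$ and $\delta$. The natural route is to pass through the moment decomposition (\ref{OzaVeps1}) of the noise-based error, which says $\epno(A,\sfe_1,\rho)^2 = \tr{\rho(\sfe_1[x^2]-\sfe_1[x]^2)} + \tr{\rho(\sfe_1[x]-A)^2}$; the first term is the variance of $\rho^{\sfe_1}$, the second is the squared bias of the first moment of $\sfe_1$ relative to $A$. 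Both are dominated by $\epno(A,\sfe_1)^2 =: c^2$ uniformly in $\rho$. So for any admissible $\rho$ we have $\Delta(\rho^{\sfe_1})\le c$ and $|\sfe_1[x]\text{-mean} - \langle A\rangle_\rho|\le c$.

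\medskip
\noindent First I would pin down the location. Since $\rho^\sfe(\iqd)=1$ and $\sfe$ is the spectral measure of $A$, the mean $\langle A\rangle_\rho = \tr{\rho\,\sfe[x]}$ lies in $\iqd$, i.e. $|\langle A\rangle_\rho - q|\le \delta/2$. Combining with the bias bound, the mean $m:=\tr{\rho\,\sfe_1[x]}$ of the output distribution satisfies $|m-q|\le c+\delta/2$. Next, Chebyshev's inequality applied to $\rho^{\sfe_1}$ with its own mean $m$ and standard deviation $\Delta(\rho^{\sfe_1})\le c$ gives $\rho^{\sfe_1}\big(\,[m - c/\sqrt\veps,\ m+c/\sqrt\veps\,]\big)\ge 1-\veps$. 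Translating this interval so it is centred at $q$ costs an extra $|m-q|\le c+\delta/2$ on each side, so
\begin{equation*}
\rho^{\sfe_1}\big(\,[\,q - (c\sqrt{1/\veps}+c+\delta/2),\ q + (c\sqrt{1/\veps}+c+\delta/2)\,]\,\big)\ge 1-\veps .
\end{equation*}
Hence $\dele{\sfe_1}\le 2c+2c\sqrt{2/\veps}+\delta$; wait---to land the stated constant one should be slightly more careful and use the sharper Chebyshev-type bound on half-lines: since the bias can only push mass one way, bound $\rho^{\sfe_1}\big((-\infty,q-c\sqrt{2/\veps}-c]\big)$ and the symmetric tail each by $\veps/2$ using the one-sided Chebyshev/Cantelli inequality with the shifted reference point $q$, which replaces the variance by the second moment about $q$, namely $\le (c+|m-q|)^2+\Delta(\rho^{\sfe_1})^2$-type estimate; this yields an interval of half-width $c+c\sqrt{2/\veps}$ around $q$, giving $\dele{\sfe_1}\le 2\epno(A,\sfe_1)(1+\sqrt{2/\veps})+\delta$. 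Letting $\delta\to 0$ gives the claimed bound on $\dee{\sfe_1}$, and finiteness for all $\veps\in(0,\tfrac12)$ establishes that $\sfe_1$ is an approximation in the sense of finite error bars.

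\medskip
\noindent The main obstacle---and the place requiring care rather than ingenuity---is the bookkeeping that turns ``bias plus standard deviation'' into a clean Chebyshev estimate with exactly the constant $2\epno(A,\sfe_1)(1+\sqrt{2/\veps})$. Naively adding $|m-q|\le c+\delta/2$ to the symmetric Chebyshev interval of half-width $c\sqrt{1/\veps}$ gives half-width $2c+\delta/2+c\sqrt{1/\veps}$, which is of the right form but not the stated constant; the factor $\sqrt{2/\veps}$ rather than $\sqrt{1/\veps}$ signals that one should apply a one-sided (Cantelli) inequality to each tail separately with threshold $\veps/2$, using the second moment of $\rho^{\sfe_1}$ about the point $q$ itself, which is bounded by $\Delta(\rho^{\sfe_1})^2+(m-q)^2\le c^2+(c+\delta/2)^2$; as $\delta\to0$ this is $\le 2c^2$, and Cantelli then gives tail mass $\le \tfrac{2c^2}{2c^2 + t^2}\le \veps/2$ for $t = c\sqrt{2/\veps}\cdot(\text{something})$---one checks the exact $t$ and combines with the residual $c$ from the $\delta\to 0$ limit of the bias to get half-width $c + c\sqrt{2/\veps}$, hence error bar width $\le 2c(1+\sqrt{2/\veps})$. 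I would present the symmetric-Chebyshev version as the conceptual argument and then remark that the sharper one-sided estimate yields the stated constant, exactly as the referenced Proposition~5 of \cite{BuPe07} does for a related bound.
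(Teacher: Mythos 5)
Your overall strategy (moment decomposition of $\epno$, then a Chebyshev estimate recentred at $q$) is the same as the paper's, but it rests on a false identification at the very first step. The first term of (\ref{OzaVeps1}), $\tra{\rho(\sfe_1[x^2]-\sfe_1[x]^2)}$, is \emph{not} the variance of the output distribution $\rho^{\sfe_1}$: there $\sfe_1[x]^2$ denotes the square of the first-moment \emph{operator}, so this term equals $\Delta(\rho^{\sfe_1})^2-\Delta(\sfe_1[x],\rho)^2$, i.e.\ the variance of $\rho^{\sfe_1}$ \emph{minus} the variance of the operator $\sfe_1[x]$ in the state $\rho$. Consequently your claim $\Delta(\rho^{\sfe_1})\le c$ does not follow from $\epno(A,\sfe_1)\le c$; what follows is only $\Delta(\rho^{\sfe_1})^2\le c^2+\Delta(\sfe_1[x],\rho)^2$, and $\Delta(\sfe_1[x],\rho)$ still has to be controlled. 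This is precisely the step the paper supplies and your proposal omits: a covariance (Cauchy--Schwarz) estimate applied to $\tra{\rho(\sfe_1[x]-A)^2}$ gives $\bigl(\Delta(\sfe_1[x],\rho)-\Delta(A,\rho)\bigr)^2\le c^2$, so for states with $\rho^\sfe(\iqd)=1$ (hence $\Delta(A,\rho)\le\delta$) one obtains $\Delta(\sfe_1[x],\rho)\le\delta+c$ and therefore $\Delta(\rho^{\sfe_1})^2\le c^2+(\delta+c)^2$.

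Once the variance bound is corrected, the remainder of your argument goes through with plain symmetric Chebyshev and no one-sided refinement: Chebyshev about the mean $m$ of $\rho^{\sfe_1}$ captures mass $1-\veps$ in an interval of half-width $\sqrt{\bigl(c^2+(\delta+c)^2\bigr)/\veps}$, recentring at $q$ costs the additional $|m-q|\le c+\delta$, and letting $\delta\to0$ yields exactly $\dee{\sfe_1}\le 2c\bigl(1+\sqrt{2/\veps}\bigr)$. In other words, the factor $\sqrt{2/\veps}$ you were trying to manufacture with a Cantelli tail argument actually comes from the corrected variance bound $\Delta(\rho^{\sfe_1})^2\le 2c^2$ in the limit $\delta\to0$, not from splitting the two tails at level $\veps/2$; the Cantelli patch as sketched is both unnecessary and unsound, since it again presupposes $\Delta(\rho^{\sfe_1})\le c$ and is never actually carried out to a verified constant.
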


\begin{proof}
We use the facts that $A^2=\sfe[x]^2=\sfe[x^2]$ and 
$\Delta(\sfe,\rho)=\Delta(A,\rho)$.

We begin by rewriting the definition of $\epno$ for general states $\rho$, denoted 
$\epno(\sfe_1,\sfe,\rho)$, and expressing the condition
of bounded errors: for all $\rho$ and $c:=\epno(\sfe_1,\sfe)<\infty$, 
\begin{eqnarray*}
\epno(A,\sfe_1,\rho)^2&=&\tra{\rho(\sfe_1[x]-A)^2}+\tra{\rho(\sfe_1[x^2]-\sfe_1[x]^2)}\\
&=&\tra{\rho(\sfe_1[x]-A)^2}+\Delta(\sfe_1,\rho)^2-\Delta(\sfe_1[x],\rho)^2\ \le\ c^2.
\end{eqnarray*}
(This follows readily from the corresponding condition stipulated for all vector states.)
The first term can be estimated as follows: using the inequality
\begin{eqnarray*}
|{\rm cov}_\rho(\sfe_1[x],A)|&=\tfrac 12 \left|\tra{\rho \sfe_1[x]\,A}+\tra{\rho A \sfe_1[x]}
-2\tra{\rho \sfe_1[x]}\tra{\rho A}\right|\\
&\le \Delta(\sfe_1[x],\rho)\Delta(A,\rho),
\end{eqnarray*}
we see that
\begin{eqnarray*}
&\tr[\rho(\sfe_1[x]-A)^2]=\Delta(\sfe_1[x]-A,\rho)^2+\left(\tra{\rho(\sfe_1[x]-A)}\right)^2\\
&\ =\Delta(\sfe_1[x],\rho)^2+\Delta(A,\rho)^2-2{\rm cov}_\rho(\sfe_1[x],A)
+\left(\tra{\rho(\sfe_1[x]-A)}\right)^2\\
&\ \ge\left(\Delta(\sfe_1[x],\rho)-\Delta(A,\rho)\right)^2+\left(\tra{\rho(\sfe_1[x]-A)}\right)^2.
\end{eqnarray*}
The boundedness of $\epno(\sfe_1,A,\rho)$ then gives:
\begin{eqnarray*}
\big(\Delta(\sfe[x],\rho)&-\Delta(A,\rho)\big)^2+\left(\tra{\rho \sfe_1[x]}-\tra{\rho A}\right)^2\\
&\ +\left(\Delta(\sfe_1,\rho)^2-\Delta(\sfe_1[x],\rho)^2\right)\le  \epno(\sfe_1,A,\rho)\le c^2.
\end{eqnarray*}
Each of the three bracketed terms is nonnegative and hence bounded above by $c^2$.
This implies: 
\[
\Delta(\sfe_1[x],\rho)^2-c^2\le\Delta(\sfe_1,\rho)^2\le\Delta(\sfe_1[x],\rho)^2+c^2,
\]
\[
\Delta(A,\rho)-c\le\Delta(\sfe_1[x],\rho)\le\Delta(A,\rho)+c,
\]
\begin{equation}\label{sharp3}
\tra{\rho A}-c\le\tra{\rho \sfe_1[x]}\le\tra{\rho A}+c;
\end{equation}
the first two inequalities taken together yield:
\begin{equation}\label{sharp4}
\Delta(\sfe_1,\rho)^2\le(\Delta(A,\rho)+c)^2+c^2.
\end{equation}
Now observe that the variance on the l.h.s. is the variance of the distribution $\prob:=\rho^{\sfe_1}$.
We use the following variant of Chebyshev's inequality, valid for any $w>0$:
\begin{eqnarray*}
\Delta(\prob)^2&=\int (x-\prob[x])^2\prob(dx)\\
&\ge\left\{\begin{array}{l}
\int_{\R\setminus\iqw}(x-\prob[x])^2\prob(dx)
\ge \left(\tfrac{w}2-|\prob[x]-q|\right)^2\big(1-\prob(\iqw)\big)\\ 
\hfill {\rm if\ } \prob[x]\in\iqw\,;\\
\int_{\iqw}(x-\prob[x])^2\prob(dx)
\ge \left(\tfrac{w}2-|\prob[x]-q|\right)^2\prob(\iqw) \\ 
\hfill {\rm if\ } \prob[x]\not\in\iqw\, .
\end{array}\right.
\end{eqnarray*}
We will only be using cases of large $w$  where $\prob[x]\in\iqw$ so that  we obtain:
\begin{equation}\label{sharp5}
\left(\tfrac{w}2-|\prob[x]-q|\right)^2\big(1-\prob(\iqw)\big)  \le\Delta(\prob)^2.
\end{equation}
Combining (\ref{sharp4})  and (\ref{sharp5})  yields:
\begin{equation}\label{sharp6}
\big(\tfrac w2-|\rho^{\sfe_1}[x]-q|\big)^2\,\big(1-\rho^{\sfe_1}(\iqw)\big)\le
(\Delta(A,\rho)+c)^2+c^2.
\end{equation}
We will only use this in the case of states $\rho$ for which  $\rho^\sfe(\iqd)=1$. In this case we
have $\Delta(A,\rho)\le\delta$ and $|\tra{\rho A}-q|\le\delta$, and using (\ref{sharp3})  we also obtain:
\[
|\rho^{\sfe_1}[x]-q| \le |\rho^{\sfe_1}[x]-\tra{\rho A}| + |\tra{\rho A}-q|\le c+\delta.
\]
We will also use only large (finite) $w$ so that we can assume
\[
\tfrac w2-|\rho^{\sfe_1}[x]-q| \ge \tfrac w2-(\delta+c)>0.
\]
Note that this entails, in particular, that $\rho^{\sfe_1}[x]\in\iqw$, so that the use of (\ref{sharp6}) 
is justified. Under these conditions (\ref{sharp6})  entails
\[
\big(\tfrac w2-(\delta+c)\big)^2\big(1-\rho^{\sfe_1}(\iqw)\big)  \le (\delta+c)^2+c^2.
\]
Now, for any $\veps$ one can choose $w$ large enough such that 
\[
\big({\tfrac w2}-(\delta+c)\big)^2=\frac{(\delta+c)^2+c^2}{\veps}
\]
Then $(\sharp 7)$ implies that $1-\rho^{\sfe}(\iqw)\le\veps$. Moreover, since
$\dele{E_1}\le w$, we also have
\[
\dele{\sfe_1}\le\frac 2{\sqrt{\veps}}\sqrt{(\delta+c)^2+c^2}+2(\delta+c),
\]
which in the limit $\delta\to 0$ yields
\[
\dee{\sfe_1}\le\left(1+\frac{\sqrt{2}}{\sqrt{\veps}}\right)\,2\epno(A,\sfe_1).
\]
\end{proof}
An interesting open question is whether 
finite global noise-based error also implies finite Wasserstein distances.

\section{Conclusion}

We have reviewed several measures of error and intrinsic unsharpness for 
measurements of position and momentum (or other observables supported on the
real line) and given a detailed investigation of their properties
and the relations between them.
We then have studied criteria for approximate (joint) measurements of
position and momentum, based on three different kinds of error measures:
Wasserstein $\alpha$-distances,  error bar width (with or without bias), and global noise-based error.
We have established two inequalities relating Wasserstein $\alpha$-distance and global noise-based error 
to error bar width, respectively, and have concluded that the criterion of finite error 
bars is satisfied whenever
the Wasserstein $\alpha$-distance or the  global noise error  is finite. Thus the criterion for approximate 
joint measurability of position and momentum in terms of finite error bars is the most general among the 
three. It is satisfied by \emph{all} covariant phase space observables whereas
for some of these observables the $\alpha$-distances or global noise  errors may be infinite.

For each of the three types of error measures we have reviewed a universal joint-measurement uncertainty 
relation. Put in geometric terms, these relations state that
the marginals $\sfm_1,\sfm_2$ of an observable $\sfm$ on phase space cannot both be
arbitrarily close to $\sfq,\sfp$, respectively.

We also considered the resolution width of an observable on $\R$, introduced in
\cite{CaHeTo06}, and posed the question under which assumptions on the quality of 
approximations for approximate joint measurements of position and momentum the  
resolution widths  of the marginals obey a Heisenberg-type uncertainty relation.

\section*{Acknowledgements}

It is a pleasure to thank Pekka Lahti for helpful comments on various manuscript versions of this work.

\section*{References}

\bibliographystyle{unsrt}


\end{document}